\newcommand\nocaption{%
    \renewcommand\p@subfigure{}
    \renewcommand\thesubfigure{\thefigure\alph{subfigure}}
}
\newtheorem{assumption}{Assumption}
\newtheorem{lemma}{Lemma}
\newtheorem{theorem}{Theorem}
\theoremstyle{remark}
\newtheorem{definition}{Definition}
\title{A/B Testing in Dense Large-Scale Networks: Design and Inference}
\author{ Preetam Nandy, Kinjal Basu, Shaunak Chatterjee, Ye Tu \\
 LinkedIn Corporation\\
 Mountain View, CA 94083\\
 \texttt{ \{ pnandy, kbasu, shchatte, ytu \} @linkedin.com}
 }
\begin{document}

\maketitle

\begin{abstract}
Design of experiments and estimation of treatment effects in large-scale networks, in the presence of strong interference, is a challenging and important problem. Most existing methods' performance deteriorates as the density of the network increases. In this paper, we present a novel strategy for accurately estimating the causal effects of a class of treatments in a dense large-scale network. First, we design an approximate randomized controlled experiment by solving an optimization problem to allocate treatments in the presence of competition among neighboring nodes. Then we apply an importance sampling adjustment to correct for any leftover bias (from the approximation) in estimating average treatment effects. We provide theoretical guarantees, verify robustness in a simulation study, and validate the scalability and usefulness of our procedure in a real-world experiment on a large social network.
\end{abstract}

\section{Introduction}
\label{sec:intro}
Measuring the effect of treatment variants is a fundamental problem in several fields of study \cite{armitage2008statistical, rubin1974estimating, hernan2006estimating}. A/B testing is a commonly used method in the internet industry \cite{kohavi2013online, kohavi2014seven, tang2010overlapping, xu2015infrastructure} and beyond, wherein randomized experiments are run with two or more such variants. Traditional A/B testing depends on the key assumption that the effect of treatment on an experiment unit is independent of the treatment allocation to other experiment units -- commonly called ``Stable Unit Treatment Value Assumption'' (SUTVA) \cite{rubin1974estimating}. However, in many important network settings, this assumption is violated due to various forms of interference \cite{aronow2012estimating, backstrom2011network, gui2015network, HudgensAndHalloran08, katzir2012framework, Saint-JacquesEtAl19, toulis2013estimation, ugander2013graph}.

We focus on one such class of problems -- a marketplace of commodity producers and consumers. 
An example is a social media platform, where the commodity is content, and the return utility is feedback. Examples include Facebook, Instagram, LinkedIn, Twitter, etc. Other marketplace problems differ primarily on the commodity and return utility in question. Rides are commodities in Uber and Lyft, retail items are commodities in marketplaces like Amazon and eBay, and money is the common return utility. Note that we do not make any assumptions on how the utility affects a producer's behavior (i.e., the incentive function).

In the marketplace setting, we define a treatment class represented by \textbf{a modification of the edge-level probability $p_{ij}$ of showing a commodity of a producer $i$ to a consumer $j$}. This abstraction represents a very general class of treatments since any redistribution or shift in producer and consumer exposure can be expressed via the change in the edge-level probabilities. In content marketplaces, such treatments can reshape the content creator's exposure to their audience, whereas, in a transportation marketplace, it can reshape the likelihood of specific drivers and riders being matched. 

In such marketplace problems, SUTVA is violated because of interference from the network. For instance, the effect of treatment on a producer (the experimental unit) is influenced by the allocation and effect of the treatment on all potential consumers of that producer, which in turn depend on each of their producers' experiences (and hence, their allocated treatments) -- thus leading to competition among producers connected to common consumers. When SUTVA is violated, the effect of treatment can be measured by allocating experimental units as well as their first-degree neighbors (and second-degree neighbors, depending upon the interference function) to the same treatment (or control) \cite{eckles2017design, Saint-JacquesEtAl19}. The number of such units for which we can successfully allocate treatments in this fashion (i.e., whose neighborhood is appropriately treated) decreases with increasing density in the graph, resulting in lower statistical power in the measurements.

In this paper, we propose a novel technique named OASIS, ``\textbf{O}ptimal \textbf{A}llocation \textbf{S}trategy and \textbf{I}mportance \textbf{S}ampling Adjustment'', that provides a randomized testing framework for large-scale marketplace problems in the presence of interference. The performance of OASIS does not deteriorate (and may improve) as the density of the network increases, and it can be used to obtain high-power measurements. Our approach relies on the existence of an ``intervening variable'' (a.k.a. mediator) --- i.e., the effect of the treatment allocation to an experimental unit's network on the unit itself is fully captured by a sufficient statistic (cf.\cite{arellano1995another, cragg1993testing, mackinnon2002comparison}). For example, the total feedback received by a content producer (or the total money made by a driver) can be an intervening variable. The producers in a marketplace compete with each other to receive feedback, and we attempt to mimic this \textit{competition effect} in the design of experiment. Furthermore, we construct an unbiased estimator of the treatment effect by applying an importance sampling correction. We use simulation studies to compare our method against a graph cluster based baseline. From both the simulation studies and from a real-world experiment on a large social network, we show that our technique \textbf{(i) compares favorably against the baseline method and also (ii) works very well for large-scale dense graphs that cannot be decomposed into reasonably isolated clusters.} 

This technique can be used in any marketplace problem where the treatment in question can be expressed via the $p_{ij}$ ``edge weight'' abstraction, and where the key assumptions (outlined as Assumptions \ref{assumption: total exposure} and \ref{assumption: balancing variable} in later sections) are satisfied. In many applications, the treatment needs to be applied over a time window during which $p_{ij}$ can change periodically (e.g., consumers' favored producers may change over time) and this dynamism can be handled by updating the experiment design periodically (see the real-world experiment in Section \ref{sec:experiments}). 

The rest of the paper is organized as follows. Section \ref{sec:setup} formulates the problem and discusses the setup in detail. The key aspects of how we design an experiment via an optimization formulation are described in Section \ref{sec:doe}. We propose the OASIS estimator and provide theoretical guarantees in Section \ref{sec:theory}. In Section \ref{sec:experiments}, we provide some empirical results both from an elaborate simulation study and a real-world experiment on a large social network graph, before concluding with a discussion in Section \ref{sec:discussion}. We end this section with a selective literature review. 

\textbf{Related Work:} Recently there has been a lot of focus on A/B testing in the presence of interference, especially on large real-world networks, with several proposed approaches. This is due to the ever-growing popularity of several marketplace products mentioned previously, and critical applications in those products violating SUTVA in randomized testing. Under the assumption that the interference between units has some known form, \cite{aronow2012estimating} propose inverse probability weighting based methods to estimate effects defined in terms of ``effective treatments'' and refines the estimator via covariance adjustment, and \cite{toulis2013estimation} propose to estimate causal effects under partial interference (i.e., when $k$ neighbors are in treatment) using a sequential design. Unlike \cite{aronow2012estimating, toulis2013estimation}, we aim to estimate the full treatment effect, i.e., all neighbors are in treatment vs. no neighbor is in treatment, for which the sequential design strategy is expected to break down (\cite{toulis2013estimation} used $k=4$ in the experiments).


Network bucket testing \cite{backstrom2011network, katzir2012framework} looks at a form of interference where a new feature will take effect only if some minimal number of a treated user's neighbors are also in treatment. The key idea, proposed in \cite{backstrom2011network} is to use a walk-based sampling method to generate a core set of users who are internally well-connected while being approximately representative of the population. \cite{katzir2012framework} generalizes this further and provides variance bounds for various core set generation functions. Graph cluster randomization is a related approach introduced in \cite{ugander2013graph}. In this work, a notion of ``network exposure'' is defined, where a unit is ``network exposed'' to treatment if its behavior under a particular assignment is the same as its behavior if everyone were assigned to treatment. For various definitions of network exposure, a clustering approach is proposed, and randomization is done at the cluster level. In a real-world network, it very difficult to obtain a large number of reasonably isolated clusters, resulting in low-powered experiments. To mitigate this issue, \cite{Saint-JacquesEtAl19} proposed ego-network randomization, where a cluster is comprised of an ``ego'' (a focal individual), and her ``alters'' (the individuals she is immediately connected to). Our class of treatment definitions can be viewed as a specific instance of network exposure. However, \textbf{our randomization is still at the node level and allows us to obtain much higher-confidence measurements with much smaller treatment exposure}. 

In the marketplace setting, \cite{ZiglerPapadogeorgou18} introduced the bipartite randomized experiment framework, where two distinct groups of units are linked together through a bipartite graph. The treatment is assigned to the first group, called \emph{diversion units}, and the response is measured on the second group, called \emph{outcome units}. Under a linear treatment exposure model, \cite{Pouget-AbadieEtAl19} proposed a clustering-based approach for estimating the exposure-response function, whereas \cite{DoudchenkoEtAl20} proposed a generalized-propensity-score \cite{HiranoImbens04} based estimator. A treatment exposure model relates the responses of the outcome units to a measure of treatment exposure they receive (cf.\ Definition \ref{definition: basic}). The measure of treatment exposure is considered to be a linear function of the treatment assignments of the diversion units in a linear exposure model (cf.\ Assumption \ref{assumption: total exposure}). Our framework differs from the bipartite randomized experiment framework in the following two major aspects. First, we consider general directed graphs where each experiment unit can be a diversion unit, an outcome unit, or both. Second, we consider a setting with limited exposure to each outcome unit, leading to a \emph{competition} among the treatment units for being exposed to an outcome unit. The limited exposure is a common feature of many marketplace settings, where the consumers (e.g., buyers in a commodity marketplace or viewers in a content marketplace) can only view a limited number of items. Finally, we note that our approach does not require extrapolation of the estimated exposure-response function for estimating the average treatment effect, which caused an additional overhead in approaches proposed by \cite{DoudchenkoEtAl20, Pouget-AbadieEtAl19}.


Two other related classical lines of work in estimating causal effect is the instrumental variable approach \cite{arellano1995another, cragg1993testing} and the mediation or intervening variable effect estimation \cite{mackinnon2002comparison}. Both approaches make key assumptions about the how the causal effect flows through specific known variables to affect the outcome, and is similar to the assumption we make in order to facilitate replicating a completely treated universe with a much smaller exposure.


\section{Problem Setup}
\label{sec:setup}
We describe the problem in a content marketplace setup, where each member is a node in a directed graph (or network) $\mathcal{G} = (\Omega, E)$, and each member is either a content-consumer, a content-producer or both. If $i \to j$, then producer $i$ is a parent of $j$ and consumer $j$ is a child of $i$. We denote the set of all parents of $j$ by $Pa(j)$ and the set of all children of $i$ by $Ch(i)$. When a consumer $j$ visits the marketplace she views a content produced by one of her parents $i$ with probability $p_{ij}$, where $\sum_{i \in Pa(j)} p_{ij} = 1$. We denote the probabilities in the existing system (i.e.,\ prior to any intervention) by $p_{ij}^{base}$ and an intervention (or a treatment) on member $j$ replaces her consumer-side experience $\{p_{ij}^{base} : i \in Pa(j)\}$ by $\{p_{ij}^{(r)} : i \in Pa(j)\}$ satisfying $\sum_{i \in Pa(j)} p_{ij}^{(r)} = 1$. 

Let $T_{E}^{(r)} = \{p_{ij}^{(r)} : (i,j) \in E\}$ be an intervention over the entire population and $\tau_r = \Exp[{Y(T_E^{(r)})}]$ be the expected response of the population under $T_{E}^{(r)}$. We assume that the responses of members $i$ and $j$, denoted by $Y_i(T_E^{(r)})$ and $Y_j(T_E^{(r)})$ respectively, are independently distributed for a given $T_E^{(r)}$. We consider $m$ treatments $T_E^{(1)}, \ldots T_E^{(m)}$ and a control environment $T_E^{(0)} = \{ p_{ij}^{base} : (i, j) \in E\}$. Our goal is to estimate the average treatment effects $\tau_1 - \tau_0, \ldots, \tau_m - \tau_0$ in a marketplace setup (see Figure \ref{fig:assumption} and \ref{fig:test}) defined below.\\

\begin{definition}\label{definition: basic}
A directed network $\mathcal{G} = (\Omega, E)$ of members is said to be a marketplace if member $i$'s response $Y_i(T_E^{(r)})$ depends on the treatment $T_E^{(r)}$ through
\begin{enumerate}
\item \textbf{Consumer-side experience}: the exposure of producers to consumer $i$: $\{p_{ki}^{(r)} : k \in Pa(i) \}$, and 
\item \textbf{Producer-side experience}: the exposure of producer $i$ to consumers: $\{p_{ij}^{(r)} : j \in Ch(i) \}$.
\end{enumerate}
\end{definition}

In a marketplace setup, $T_E^{(r)}$ not only has a direct effect on $Y_i(T_E^{(r)})$ through the modification of the existing consumer-side experience $\{p_{ri}^{base} : r \in Pa(i)\}$ (i.e., what contents have been shown to $i$), but also has an indirect effect on $Y_i(T_E^{(r)})$ through the modification of the producer-side experience $\{p_{ij}^{base} : j \in Ch(i)\}$ (i.e., other members liking, sharing or commenting on the contents produced by $i$). The latter violates the SUTVA assumption and is known as the network effect. Further, each producer competes with its second-degree neighbors, 
to obtain higher $p_{ij}^{(r)}$ values as $\sum_{i \in Pa(j)} p_{ij}^{(r)} = 1$ for all $j$. This was referred as the \emph{competition effect} in Section \ref{sec:intro}. In Figure \ref{fig:test}, the producer-side experience of member 1 depends on the feedback received from consumers 2, 3 and 5, while the consumer-side depends on the incoming edges from producers $4$ and $5$. Due to this fact, there is a competition between member $4$ and member $5$ for getting the feedback from consumer $1$.

Recall that a classical A/B testing setting would consider randomly selected non-overlapping subsets $\Omega_0$, \ldots, $\Omega_m$ of $\Omega$ and assign treatment $\{p_{ki}^{(r)} : k \in Pa(i)\}$ to each $i \in \Omega_{r}$ and estimate $\tau_r - \tau_0$ by 
\begin{equation}
\label{estimated ATE}
\hat{\tau}_r - \hat{\tau}_0 = \frac{1}{|\Omega_r|} \sum_{i \in \Omega_r} Y_i^{(exp)} - \frac{1}{|\Omega_0|} \sum_{i \in \Omega_0} Y_i^{(exp)},
\end{equation} where $Y_i^{(exp)}$ is the response of member $i$ under the aforementioned experimental design. It is easy to see that the estimator defined in \eqref{estimated ATE} is not unbiased as the producer-side experience of member $i \in \Omega_r$ is not the same in the classical A/B testing design and under $T_E^{(r)}$, since all $j \in Ch(i)$ is not guaranteed to be in $\Omega_r$. The bias of the estimator in \eqref{estimated ATE} is known as the ``spillover effect'' in the literature. A commonly used technique for mitigating the spillover effect is to choose $\Omega_0$, \ldots, $\Omega_m$ as disjoint from each other as possible. This is called the cluster-based approach. Even for moderately dense networks, a cluster-based method does not provide a reasonable solution, since it either induces a large bias by ignoring inter-cluster edges and/or suffers from the loss of power by considering a small sample consisting of nearly perfect clusters. In this paper, we propose a complementary approach under the following assumption.
 
\begin{assumption}\label{assumption: total exposure}
The producer-side experience of member $i$ depends on the treatment condition $T_E^{(r)}$ only through a weighted total exposure $Z_i(T_E^{(r)}) = \sum_{j \in Ch(i)} \alpha_{ij} {p_{ij}^{(r)}}$ for some constants $\alpha_{ij}$'s representing the strength of the relationship between producer $i$ and consumer $j$ (see Figure \ref{fig:assumption}). 
\end{assumption}

\begin{figure}[!ht]
\nocaption
        \begin{subfigure}[t]{0.49\textwidth}
        \centering
        \begin{tikzpicture}[scale=0.7, transform shape]
\node[circle,draw] (X5) at (0, 0) {5};
\node[circle,draw] (X1) at (1, 1) {1};
\node[circle,draw] (X2) at (0, 2) {2};
\node[circle,draw] (X3) at (2, 2) {3};
\node[circle,draw] (X6) at (3, 1) {6};
\node[circle,draw] (X4) at (2, 0) {4};
\draw[->, thick, bend right] (X1) edge node[left=5pt, above=-2pt]{$p_{15}$}  (X5);
\draw[->, thick, bend right] (X5) edge node[right=6pt, below=-2pt]{$p_{51}$}  (X1);
\draw[->, thick] (X1) edge  node[right=6pt, above=0pt]{$p_{12}$}  (X2);
\draw[->, thick] (X1) edge  node[left=4pt, above=0pt]{$p_{13}$}  (X3);
\draw[->, thick, bend right] (X3) edge  node[left=8pt, below=-4pt]{$p_{36}$}  (X6);
\draw[->, thick, bend right] (X6) edge  node[right=9pt, above=-4pt]{$p_{63}$}  (X3);
\draw[->, thick] (X4) edge  node[right=8pt, above=-3pt]{$p_{41}$}  (X1);
\draw[->, thick] (X4) edge  node[right=6pt, below=-2pt]{$p_{46}$}  (X6);
\draw[->, thick, bend left] (X2) edge node[left=0pt, above=-2pt]{$p_{23}$}  (X3);
\draw[->, thick, bend right] (X5) edge node[left=-1pt, below=-2pt]{$p_{54}$}  (X4);
\end{tikzpicture}
            \caption[]%
            {{\small A producer-consumer network.}} 
            \label{fig:test}   
        \end{subfigure}
        \hfill
	\begin{subfigure}[t]{0.49\textwidth}
	\centering
	\begin{tikzpicture}[scale=0.8, transform shape]
\node[draw] (X1) at (0, 0.75) {$T_E^{(r)}$};
\node[align=center, draw] (X2) at (4, 1.5) {producer-side~\\experience};
\node[align=center, draw] (X3) at (4, 0) {consumer-side\\experience};
\node[draw] (X4) at (7, 1.5) {$Z_i(T_E^{(r)})$};
\node[draw] (X5) at (7, 0) {$Y_i(T_E^{(r)})$};
\draw[->, thick] (X1) edge node[sloped,anchor=south]{Network Effect}  (X2);
\draw[->, thick] (X1) edge node[sloped,anchor=north]{Direct Effect}  (X3);
\draw[->, thick] (X2) edge  node[sloped,anchor=south]{A\ref{assumption: total exposure}}  (X4);
\draw[->, thick] (X4) edge  node[right=6pt, above=0pt]{}  (X5);
\draw[->, thick] (X3) edge  node[left=4pt, above=0pt]{}  (X5);
\end{tikzpicture}
\caption{A graphical representation of Assumption \ref{assumption: total exposure}.}
\label{fig:assumption}
        \end{subfigure}
        \label{fig:intro}
\end{figure}

In the content marketplace setup, Assumption \ref{assumption: total exposure} corresponds to the situation where the producer-side experience of member $i$ depends only on an aggregated feedback to member $i$'s content. Note that as the potential number of consumers for member $i$'s content increases, it becomes more likely that the total feedback determines the producer-side experience. Thus, Assumption \ref{assumption: total exposure} becomes more reasonable, as the density of the network increases.

In Section \ref{sec:doe}, we construct an experiment $T_E^*$ where we exactly match $p_{ki}^{*}$ with $p_{ki}^{(r)}$ for all $i \in \Omega_r$ and $k \in Pa(i)$ (as in the classical A/B testing design), and attempt to match  $Z_i(T_E^*)$ with  $Z_i(T_E^{(r)})$ as much as possible by solving a joint optimization problem for all $i \in \Omega_r$ and $r \in [m]$ (where $[m] = 0, \ldots, m$). 
%
In practice, it is reasonable to assume that $\alpha_{ij}$'s are unknown or approximately known. In Section \ref{subsection: robustness}, we show the robustness of our estimator under the violation of the assumption that $\alpha_{ij}$'s are known. Finally, we propose an importance sampling based adjustment to correct for experimental bias and describe the OASIS estimator. 

\section{Design of Experiment}
\label{sec:doe}
We present our design in Algorithm \ref{alg: exp design} and an example in Figure \ref{fig:test2}. Let $\Omega_r$, $\Lambda_r$, and $C'$ be disjoint subsets of $\Omega$ (as in Algorithm \ref{alg: exp design}), and let $\Omega' = \cup_{r=0}^{m} \Omega_r$ and $\Lambda' = \cup_{r=0}^{m} \Lambda_r$. The three disjoint subsets represent the following:
\begin{enumerate}
\item $\Omega_r$: This is the population which will be used to measure $Y(T_E^{(r)})$. $|\Omega_r|$ determines the statistical power of the measurement. We allocate the exact consumer-side experience for each member $i \in \Omega_r$. The producer side experience is approximately matched by an optimization formulation (see below).
\item $\Lambda_r$: This is a shadow population for which we also allocate the exact consumer-side experience. As shown in Figure \ref{fig:test2}, since $\Omega_r$ matches both the consumer-side and producer-side experience while $\Lambda_r$ matches only the consumer side experience, we can compare $\Omega_r$ and $\Lambda_r$ to detect the presence of network effect. 
\item $C'$: Members whose incoming edges from $\Omega'$ are assigned weights to match the producer experience for members in $\Omega'$.
\end{enumerate}

In order to achieve the above objectives, we assign edges in the following manner: 
\begin{enumerate}
 \item $p_{ij}^{(r)}$ to all incoming edges to members in $\Omega_r\cup\Lambda_r$, hence exactly matching their consumer side experience.
 \item a new treatment $p_{ij}^*$ to all incoming edges to members in $C'$, to approximately match the producer side experience of members in $\Omega'$, and
 \item $p_{ij}^{base}$ to all incoming edges to members not in $\Omega_r\cup\Lambda_r\cup C'$, to control the overall risk. 
\end{enumerate}

We solve an optimization problem to define the new treatment $p_{ij}^*$ for all edges from $\Omega'$ to $C'$ and line (6) of Algorithm \ref{alg: exp design} defines $p_{ij}^{*}$ of all other incoming edges to $C'$ such that $\sum_{i \in Pa(j)} p_{ij}^* = 1$ for all $j \in C'$. The selection probability $q$ in Algorithm 1 controls the number of members $|C'|$ that are receiving the new treatment.  

%
\begin{algorithm}[ht]
    \caption{Optimal Allocation Strategy (OAS)}
    \label{alg: exp design}
    \begin{algorithmic}[1]
        \State  Randomly select disjoint subsets $\Omega_0, \Omega_1$, \ldots, $\Omega_m$ of $\Omega$ and define $\Omega' = \cup_{r = 0}^{m} \Omega_r$;
        \State  Randomly select disjoint subsets $\Lambda_0, \Lambda_1, \ldots, \Lambda_m$ of $\Omega \setminus \Omega'$ and define $\Lambda' = \cup_{r = 0}^{m} \Lambda_r$;
                \State Set $p_{ij}^* = p_{ij}^{(r)}$ for all $j \in \Omega_r \cup \Lambda_r$, $r = 0,\ldots,m$, $i \in Pa(j)$;
        \State Construct a subset of children $C'$ by selecting each member in $\cup_{i \in \Omega'}Ch(i) \setminus (\Omega^{\prime} \cup \Lambda^{\prime})$ with probability $q$;
        \State Obtain $\{p_{ij}^* : j \in C', ~ i \in Pa(j)\cap\Omega^{\prime}\}$ by solving the optimization problem given in \eqref{eq:optimization};
        \State Set $p_{ij}^* = p_{ij}^{base} ~ \times~ \frac{1 - \sum_{k \in Pa(j)\cap\Omega^{\prime}} p_{kj}^*}{1 - \sum_{k \in Pa(j)\cap\Omega^{\prime}} p_{kj}^{base}}$ for $j \in C'$ and $i \in Pa(j)\setminus \Omega'$; \label{line: adjustment}
        \State For all other $(i, j) \in E$, set $p_{ij}^* = p_{ij}^{base}$;
    \end{algorithmic}
\end{algorithm}


The sets $\Lambda_r$ will also be used in the importance sampling step described in Section \ref{sec:theory}. Thus, the existence of $\Lambda_r$ increases the accuracy of our final estimator without increasing the time complexity of the optimization problem. Further, as mentioned above, $\Omega_r$ and $\Lambda_r$ can be compared to detect the presence of the network effect in a real-world experiment.

\begin{figure}[!ht]
        \centering
        \begin{subfigure}[t]{0.235\textwidth}
            \centering
           \begin{tikzpicture}[scale=0.7, transform shape]
\node[circle,draw, fill=red!30] (X5) at (0, 0) {5};
\node[circle,draw, fill=red!70] (X1) at (1, 1) {1};
\node[circle,draw, fill=blue!30] (X2) at (0, 2) {2};
\node[circle,draw] (X3) at (2, 2) {3};
\node[circle,draw, fill=blue!70] (X6) at (3, 1) {6};
\node[circle,draw] (X4) at (2, 0) {4};
\draw[->, thick, bend right] (X1) edge node[left=5pt, above=-2pt]{$p_{15}^{(0)}$}  (X5);
\draw[->, thick, bend right] (X5) edge node[right=6pt, below=-4pt]{$p_{51}^{(0)}$}  (X1);
\draw[->, thick] (X1) edge  node[right=6pt, above=0pt]{$p_{12}^{(1)}$}  (X2);
\draw[->, thick, dashed] (X1) edge  node[left=4pt, above=0pt, color=white]{$p_{13}^{*}$}  (X3);
\draw[->, thick, bend right] (X3) edge  node[left=12pt, below=-10pt]{$p_{36}^{(1)}$}  (X6);
\draw[->, thick, bend right, dashed] (X6) edge  node[right=9pt, above=-4pt, color=white]{$p_{63}^{*}$}  (X3);
\draw[->, thick] (X4) edge  node[right=10pt, above=-5pt]{$p_{41}^{(0)}$}  (X1);
\draw[->, thick] (X4) edge  node[right=6pt, below=-4pt]{$p_{46}^{(1)}$}  (X6);
\draw[->, thick, bend left, dashed] (X2) edge node[left=0pt, above=-2pt, color=white]{$p_{23}^{*}$}  (X3);
\draw[->, thick, bend right, dashed] (X5) edge node[left=-1pt, below=-2pt, color=white]{$p_{54}^{base}$}  (X4);
\end{tikzpicture}            
        \caption[]{{\small Steps 1-3 of OAS}}    
            \label{fig:net14}
        \end{subfigure}%
        \hfill
        \begin{subfigure}[t]{0.235\textwidth}  
            \centering 
           \begin{tikzpicture}[scale=0.7, transform shape]
\node[circle,draw, fill=red!30] (X5) at (0, 0) {5};
\node[circle,draw, fill=red!70] (X1) at (1, 1) {1};
\node[circle,draw, fill=blue!30] (X2) at (0, 2) {2};
\node[circle,draw, fill=green!70] (X3) at (2, 2) {3};
\node[circle,draw, fill=blue!70] (X6) at (3, 1) {6};
\node[circle,draw] (X4) at (2, 0) {4};
\draw[->, thick, bend right] (X1) edge node[left=5pt, above=-2pt]{$p_{15}^{(0)}$}  (X5);
\draw[->, thick, bend right] (X5) edge node[right=6pt, below=-4pt]{$p_{51}^{(0)}$}  (X1);
\draw[->, thick] (X1) edge  node[right=6pt, above=0pt]{$p_{12}^{(1)}$}  (X2);
\draw[->, thick] (X1) edge  node[left=4pt, above=0pt]{$p_{13}^{*}$}  (X3);
\draw[->, thick, bend right] (X3) edge  node[left=12pt, below=-10pt]{$p_{36}^{(1)}$}  (X6);
\draw[->, thick, bend right] (X6) edge  node[right=9pt, above=-4pt]{$p_{63}^{*}$}  (X3);
\draw[->, thick] (X4) edge  node[right=10pt, above=-5pt]{$p_{41}^{(0)}$}  (X1);
\draw[->, thick] (X4) edge  node[right=6pt, below=-4pt]{$p_{46}^{(1)}$}  (X6);
\draw[->, thick, bend left, dashed] (X2) edge node[left=0pt, above=-2pt, white]{$p_{23}^{*}$}  (X3);
\draw[->, thick, bend right, dashed] (X5) edge node[left=-1pt, below=-2pt, color=white]{$p_{54}^{base}$}  (X4);
\end{tikzpicture}
            \caption[]%
            {{\small Steps 4-5 of OAS}}    
            \label{fig:net24}
        \end{subfigure}%
        \hfill
        \begin{subfigure}[t]{0.235\textwidth}   
            \centering 
         \begin{tikzpicture}[scale=0.7, transform shape]
\node[circle,draw, fill=red!30] (X5) at (0, 0) {5};
\node[circle,draw, fill=red!70] (X1) at (1, 1) {1};
\node[circle,draw, fill=blue!30] (X2) at (0, 2) {2};
\node[circle,draw, fill=green!70] (X3) at (2, 2) {3};
\node[circle,draw, fill=blue!70] (X6) at (3, 1) {6};
\node[circle,draw] (X4) at (2, 0) {4};
\draw[->, thick, bend right] (X1) edge node[left=5pt, above=-2pt]{$p_{15}^{(0)}$}  (X5);
\draw[->, thick, bend right] (X5) edge node[right=6pt, below=-4pt]{$p_{51}^{(0)}$}  (X1);
\draw[->, thick] (X1) edge  node[right=6pt, above=0pt]{$p_{12}^{(1)}$}  (X2);
\draw[->, thick] (X1) edge  node[left=4pt, above=0pt]{$p_{13}^{*}$}  (X3);
\draw[->, thick, bend right] (X3) edge  node[left=12pt, below=-10pt]{$p_{36}^{(1)}$}  (X6);
\draw[->, thick, bend right] (X6) edge  node[right=9pt, above=-4pt]{$p_{63}^{*}$}  (X3);
\draw[->, thick] (X4) edge  node[right=10pt, above=-5pt]{$p_{41}^{(0)}$}  (X1);
\draw[->, thick] (X4) edge  node[right=6pt, below=-4pt]{$p_{46}^{(1)}$}  (X6);
\draw[->, thick, bend left] (X2) edge node[left=0pt, above=-2pt]{$p_{23}^{*}$}  (X3);
\draw[->, thick, bend right, dashed] (X5) edge node[left=-1pt, below=-2pt, color=white]{$p_{54}^{base}$}  (X4);
\end{tikzpicture}
            \caption[]%
            {{\small Step 6 of OAS}}    
            \label{fig:net34}
        \end{subfigure}%
        \hfill
        \begin{subfigure}[t]{0.235\textwidth}   
            \centering 
           \begin{tikzpicture}[scale=0.7, transform shape]
\node[circle,draw, fill=red!30] (X5) at (0, 0) {5};
\node[circle,draw, fill=red!70] (X1) at (1, 1) {1};
\node[circle,draw, fill=blue!30] (X2) at (0, 2) {2};
\node[circle,draw, fill=green!70] (X3) at (2, 2) {3};
\node[circle,draw, fill=blue!70] (X6) at (3, 1) {6};
\node[circle,draw] (X4) at (2, 0) {4};
\draw[->, thick, bend right] (X1) edge node[left=5pt, above=-2pt]{$p_{15}^{(0)}$}  (X5);
\draw[->, thick, bend right] (X5) edge node[right=6pt, below=-4pt]{$p_{51}^{(0)}$}  (X1);
\draw[->, thick] (X1) edge  node[right=6pt, above=0pt]{$p_{12}^{(1)}$}  (X2);
\draw[->, thick] (X1) edge  node[left=4pt, above=0pt]{$p_{13}^{*}$}  (X3);
\draw[->, thick, bend right] (X3) edge  node[left=12pt, below=-10pt]{$p_{36}^{(1)}$}  (X6);
\draw[->, thick, bend right] (X6) edge  node[right=9pt, above=-4pt]{$p_{63}^{*}$}  (X3);
\draw[->, thick] (X4) edge  node[right=10pt, above=-5pt]{$p_{41}^{(0)}$}  (X1);
\draw[->, thick] (X4) edge  node[right=6pt, below=-4pt]{$p_{46}^{(1)}$}  (X6);
\draw[->, thick, bend left] (X2) edge node[left=0pt, above=-2pt]{$p_{23}^{*}$}  (X3);
\draw[->, thick, bend right] (X5) edge node[left=-1pt, below=-2pt]{$p_{54}^{base}$}  (X4);
\end{tikzpicture}
            \caption[]%
            {{\small Step 7 of OAS}}    
            \label{fig:net44}
        \end{subfigure}
        \caption{Consider Algorithm \ref{alg: exp design} with the network in Figure \ref{fig:test}. Let $\Omega_0 = \{1\}$, $\Omega_1 = \{6\}$, $\Lambda_0 = \{5\}$, $\Lambda_1 = \{2\}$ and $C' = \{3\}$. In (a), we assign the color red or blue to a node $j$ with consumer-side experience $\{p_{ij}^{(0)} : i \in  Pa(j)\}$ or $\{p_{ij}^{(1)} : i \in  Pa(j)\}$ respectively. In (b) we assume $\alpha_{ij} = 1$ for all $(i,j) \in E$ and obtain $\{p_{13}^*, p_{63}^*\}$ by minimizing $(p_{12}^{(0)} + p_{13}^{(0)} - p_{12}^{(1)} - p_{13})^2 + (p_{63}^{(1)} - p_{63})^2$. Next, we assign $p_{23}^* = p_{23}^{base} \times \frac{1 - (p_{13}^* + p_{63}^*)}{1 - (p_{13}^{base} + p_{63}^{base})}$ in (c). Finally, we assign $p_{54}^{base}$ to the edge $5 \to 4$ in (d).} 
        \label{fig:test2}
\end{figure}
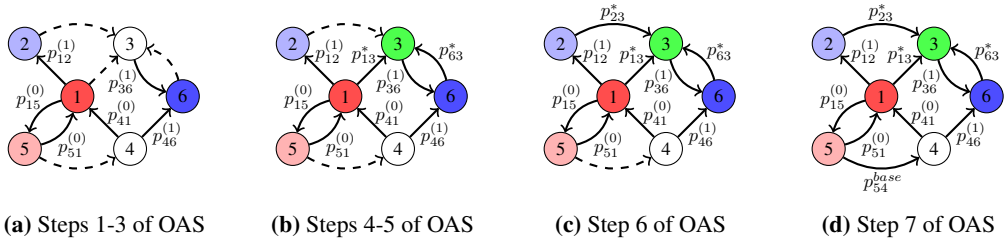

\subsection{Optimization Problem}
Let $E^* \subseteq E$ be the set of all edges from $\Omega'$ to $C'$. Our goal is to obtain $\{p_{ij}^* : (i, j) \in E^*\}$ such that $J(E^*) = \sum_{r = 0}^m\sum_{i \in \Omega_r} (Z_i(T_E^{(r)}) - Z_i(T_E^*))^2$ is minimized. Recall that $
Z_i(T_E^{(r)}) = \sum_{j \in Ch(i)} \alpha_{ij} p_{ij}^{(r)}.
$ By Algorithm \ref{alg: exp design}, we first assign all edge weights that won't be determined by the optimization. This includes assigning $p_{ij}^{(s)}$ to all edges $E_{i,s}$ from $i$ to members in $\Omega_s\cup\Lambda_s$ and assign $p_{ij}^{base}$ to all edges $E_{i,base}$ from $i$ to members not in $\Omega_s\cup\Lambda_s\cup C'$. To capture this already assigned exposure of a member $i \in \Omega'$, we define
\begin{align*}
Z_i'(T_{E}^*) = \sum_{s = 0}^{m}\sum_{(i,j) \in E_{i,s}} \alpha_{ij} p_{ij}^{(s)} +  \sum_{(i, j) \in E_{i,base}} \alpha_{ij} p_{ij}^{base}.
\end{align*}
Next, we denote the set of all edges from $i$ to $C'$ by $E_{i,C'}$. All outgoing edges from $i$ is the union of $\{E_{i,s} : s=0,\ldots,m\}$, $E_{i,base}$ and $E_{i,C'}$ and we adjust the exposure level corresponding to $E_{i,C'}$ to minimize $(Z_i(T_E^{(r)}) - Z_i(T_E^*))^2$ simultaneously for all $i \in \Omega_r$, $r\in \{0,\ldots,m\}$. Also, note that $\cup_{i\in \Omega'} E_{i,C'} = E^*$.
Therefore, we find $\{p_{ij}^* : (i,j) \in E^*\}$ by solving 

\begin{equation}
\label{eq:optimization}
\begin{aligned}
& \underset{p_{ij}}{\text{Minimize}}
& &\sum_{r = 0}^m\sum_{i \in \Omega_r}\big(Z_i(T_E^{(r)}) - Z_i'(T_{E}^*) - \hspace{-0.1in}\sum_{(i,j) \in E_{i,C'}} \alpha_{ij} p_{ij}\big)^2 \\
& \text{subject to} & & \sum_{i \in Pa(j)\cap\Omega^{\prime}} p_{ij} \leq 1,  \\
& & & p_{ij} \geq 0.\\
\end{aligned}
\end{equation}

{\bf Risk Control}: The risk of an experiment is defined as the total negative impact of an online experiment on members' experience. The risk of OAS tend to be higher than a classical A/B testing design due the modification in the consumer-side experience of members in $C'$. We can control the risk by restricting the size of $C'$, and by imposing additional constraints to the optimization problem to control the deviation of $p_{ij}^*$ from $p_{ij}^{base}$ (see Appendix \ref{A-subsection: risk control}). 

{\bf Scaling the Optimization}: The quadratic programming (QP) defined in \eqref{eq:optimization} does not scale well with the number of edges $n := |E^*|$. Even in moderately sized experiments in social network graphs, we can expect $n$ to range in billions, making it almost impossible to solve the QP. We propose an iterative approximation that iterates over $K$ overlapping sub-problems of roughly equal size $n_{sub}$, where each sub-problem respects the constraints of the full optimization and the solution of each sub-problem can improve the potential solution obtained in the previous step. By following this strategy, we achieve an $\mathcal{O}(K \times n_{sub}^3)$ time complexity, which is much better than the $\mathcal{O}(n^3)$ time complexity of QP when $K^{1/3} \times n_{sub} << n$. We use the Operator Splitting method to solve each QP \cite{osqp-infeasibility, osqp-codegen, osqp}. The detailed algorithm and simulation results are provided in Appendix \ref{A-subsection: scaling}.

\textbf{Robustness}: \label{subsection: robustness}
In the Appendix \ref{A-subsection: robustness}, we show the robustness of the OAS design in the sense that we can work with partially known $\alpha_{ij}$'s (see Assumption \ref{assumption: total exposure}) or assume them to be equal to one without inducing any bias, as long as certain assumptions of independence are satisfied. More precisely, if we design a treatment $T_E^*$ and a member $i$ such that $p_{ki}^{*} = p_{ki}^{(r)}$ for all $k \in Pa(i)$ and
$$
\sum_{j \in Ch(i)}\beta_{ij} p_{ij}^{*} = \sum_{j \in Ch(i)}\beta_{ij} p_{ij}^{(r)},
$$
 then $\Exp[Y_i(T_E^*)] = \Exp[Y_i(T_E^{(r)})]$ holds under much less strict condition than the equality of $\beta_{ij}$'s and $\alpha_{ij}$'s.  In our simulation study in Section \ref{sec:experiments}, consider further violations of Assumption \ref{assumption: total exposure}, where $Z_i(T_E^{(r)})$'s are chosen to be non-linear functions of $p_{ij}^{(r)}$'s.




\section{Post-experiment Bias Correction}
\label{sec:theory}

\textbf{\textit{Importance Sampling Adjustment}: }\label{section: importance sampling}
Under the following assumption, we propose an unbiased estimator of the average treatment effect $\tau_r - \tau_0$ under a weaker requirement than a ''perfect'' design of experiment that facilitates the unbiased estimation using \eqref{estimated ATE}.

\begin{assumption}\label{assumption: balancing variable}
For any treatment $T_E^{(r)}$, the producer-side experience of member $i$ depends on $T_E^{(r)}$ only through the total exposure $Z_i(T_E^{(r)}) = \sum_{j \in Ch(i)} Z_{ij}(T_E^{(r)})$ for some observable i.i.d.\ random variables $Z_{ij}$'s.
We assume that $Z_i(T_E^{(r)})$ and $\{p_{ki}^{(r)} : k \in Pa(i)\}$ are independently distributed.
\end{assumption}


The first part of Assumption \ref{assumption: balancing variable} is weaker than Assumption \ref{assumption: total exposure} where we additionally assume $Z_{ij}(T_E^{(r)}) = \alpha_{ij}p_{ij}^{(r)}$. Also, note that the existence of an intermediate variable or mediator $Z_i(T_E^{(r)})$ need not be unique. For example, in the content marketplace setup, $Z_{ij} (T_E^{(r)})$ can be a function of $p_{ij}^{(r)}$, as well as $Z_{ij} (T_E^{(r)})$ can be a more downstream variable representing the feedback of member $j$ on the content of member $i$. In the latter case, the second part of Assumption \ref{assumption: balancing variable} corresponds to the assumption that the feedback on member $i$'s content is independent of the contents shown to member $i$, which is a reasonable assumption for most real-world recommendation systems.

\begin{theorem}
\label{theorem: unbiased}
Let $f_r$ denote the density of $Z_i(T_E^{(r)})$, and let $f_r^*$ denote the density of $Z_i(T_E^*)$ conditionally on $i \in \Omega_r$, where $T_E^*$ is the output of Algorithm \ref{alg: exp design}. We define
\[
w_i = \frac{f_r(Z_i(T_E^*))}{f_r^*(Z_i(T_E^*))},~~\text{and}~~ \hat{\tau}_r = \frac{1}{|\Omega_r|} \sum_{i \in \Omega_r}  w_iY_i(T_E^*) 
\]
Then under Assumption \ref{assumption: balancing variable}, we have $\Exp[\hat{\tau}_r] = \tau_r$, $\forall~r \in \{0, \ldots, m\}$.
\end{theorem}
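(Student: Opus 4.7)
The plan is to exploit Assumption \ref{assumption: balancing variable} to factor the response through the one-dimensional summary $Z_i$, and then apply the standard change-of-measure identity to correct for the mismatch between the distribution of $Z_i$ under the designed experiment $T_E^*$ and under the target treatment $T_E^{(r)}$. The cleanliness of the argument comes from two ingredients: by line 3 of Algorithm \ref{alg: exp design}, the consumer-side experience is matched exactly on $\Omega_r$ (i.e., $p_{ki}^* = p_{ki}^{(r)}$ for every $i \in \Omega_r$ and $k \in Pa(i)$), so only the producer-side channel needs correction; and Assumption \ref{assumption: balancing variable} asserts that $Z_i$ is independent of the consumer-side exposures $\{p_{ki}\}$, which lets the conditional-expectation function of $Y_i$ given $Z_i = z$ be the same function under $T_E^{(r)}$ and under $T_E^*$ conditional on $i \in \Omega_r$.

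Concretely, I would first fix $r$ and $i \in \Omega_r$ and define
$g_r(z) := \mathbb{E}[Y_i \mid Z_i = z,\ \{p_{ki}\} = \{p_{ki}^{(r)}\}]$.
By the matching property above and by Assumption \ref{assumption: balancing variable}, this same function $g_r$ governs the conditional expectation of $Y_i$ both under $T_E^{(r)}$ (globally) and under $T_E^*$ (conditional on $i \in \Omega_r$). Integrating against the appropriate marginal densities then gives
\[
\tau_r = \mathbb{E}[Y_i(T_E^{(r)})] = \int g_r(z)\, f_r(z)\, dz,
\quad
\mathbb{E}[w_i Y_i(T_E^*) \mid i \in \Omega_r] = \int g_r(z)\, \frac{f_r(z)}{f_r^*(z)}\, f_r^*(z)\, dz,
\]
and the right-hand sides coincide. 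Averaging over $i \in \Omega_r$ and taking an outer expectation over the (uniform) random selection of $\Omega_r$ yields $\mathbb{E}[\hat{\tau}_r] = \tau_r$ by linearity.

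The main subtlety, as I see it, is the absolute-continuity condition $f_r \ll f_r^*$ implicit in the definition of $w_i$: the support of $Z_i(T_E^*)$ conditional on $i \in \Omega_r$ must contain the support of $Z_i(T_E^{(r)})$, or the change-of-measure identity can fail pointwise. One expects this from the additive structure $Z_i = \sum_{j \in Ch(i)} Z_{ij}$ with i.i.d.\ summands guaranteed by Assumption \ref{assumption: balancing variable}, but it merits an explicit remark in the write-up. A secondary subtlety is being careful about the level of conditioning when invoking the independence clause of Assumption \ref{assumption: balancing variable}, since $\{p_{ki}^*\}$ on $\Omega_r$ is a deterministic function of $r$ (it equals $\{p_{ki}^{(r)}\}$), so the independence claim reduces to the clean statement that $Z_i$ has marginal density $f_r^*$ on $\Omega_r$ irrespective of the value of the consumer-side weights.
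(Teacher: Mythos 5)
Your proposal is correct and follows essentially the same route as the paper: you use the design property $p_{ki}^* = p_{ki}^{(r)}$ on $\Omega_r$ together with Assumption \ref{assumption: balancing variable} to show that the conditional mean of $Y_i$ given $Z_i = z$ is the same function (your $g_r$, the paper's $\Exp[Y_i(z,\mathbf{W}_i(T_E^{(r)}))]$) under both $T_E^{(r)}$ and $T_E^*$ restricted to $\Omega_r$, and then conclude by the standard importance-sampling change of measure. Your added remark on the absolute-continuity condition $f_r \ll f_r^*$ is a legitimate point the paper leaves implicit, but it does not alter the argument.
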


The correctness of Theorem \ref{theorem: unbiased} does not rely on Assumption \ref{assumption: total exposure} and the experimental design. However, from a practical standpoint, if we (approximately) know the dependency of $Z_i(T_E)$, we should use it to do the matching while designing the experiment. It is crucial for reducing the variance of the importance sampling weights as well as the corresponding estimator. Further, we recommend to use the self-normalized importance sampling $\frac{1}{\sum w_i} \sum Y_iw_i$, as it is known to be more stable in practice.

The densities $f_r^*$ and $f_r$ in Theorem \ref{theorem: unbiased} can be estimated using the data $\{Z_i(T_E^*) : i \in \Omega_r\}$ and the data $\{Z_{ij}(T_E^{(r)}) : i \in \Omega_r, ~ j \in Ch(i)\cap(\Omega_r\cup\Lambda_r)\}$ respectively, where $\Omega_r$ and $\Lambda_r$ are as in Algorithm \ref{alg: exp design}.  For further the details, please see Appendix \ref{A-subsection: density estimation}.


\textbf{\textit{OASIS}:} 
\label{subsection: OASIS}
We conclude this section by gluing all the pieces together in Algorithm \ref{alg: OASIS}, called Optimal Allocation Strategy and Importance Sampling Adjustment (OASIS). Under certain assumptions, an asymptotically correct $(1-\alpha)100\%$ confidence interval for $\tau_r$ is given by $\hat{\tau}_r \pm \Phi^{-1}(1 - \alpha/2) \times \hat{\sigma}_r,$ where $\Phi$ is the CDF of the standard normal distribution, where $\hat{\sigma}_r$ is a bootstrap variance estimator \cite{Efron79} (see Appendix \ref{A-subsection: bootstrap}). The result follows from the consistency of density estimation, the central limit theorem and the consistency of the bootstrap variance estimator. Note that the estimator in Theorem \ref{theorem: unbiased} is an empirical average when the densities are known, implying the consistency of the bootstrap variance estimators under mild assumptions. When the densities are unknown, the consistency of the bootstrap variance relies on the consistency of the density estimation (which holds under some regularity conditions).
\begin{algorithm}[ht]
    \caption{OASIS}
    \label{alg: OASIS}
    \begin{algorithmic}[1]
        \State Obtain $T_E^*$ by applying Algorithm \ref{alg: exp design};
         \State Run experiment $T_E^*$ to collect data; 
         \State Obtain estimated densities $\hat{f}_r^*$ and $\hat{f}_r$ of $\{Z_i(T_E^*) : i \in \Omega_r\}$ and $\{Z_i(T_E^{(r)}) : i \in \Omega_r\}$ using the aforementioned technique, for $r = 0, \ldots, m$;
         \State Obtain $\hat{\tau}_r = \frac{1}{|\Omega_r|}\sum_{i \in \Omega_r} Y_i(T_E^*) \frac{\hat{f}_r(Z_i(T_E^*))}{\hat{f}_r^*(Z_i(T_E^*))}$ for $r = 0, \ldots, m$;
         \State Return $\hat{\tau}_1 - \hat{\tau}_0,~\ldots,~ \hat{\tau}_m - \hat{\tau}_0$.
       \end{algorithmic}
\end{algorithm}

\section{Experiments}
\label{sec:experiments}

\textbf{\textit{Simulation Study}: }
We compare OASIS with a graph-cluster randomization method that randomly assigns treatment (or control) to all members in a cluster and uses \eqref{estimated ATE} for estimating treatment effects. The cluster-based (CB) method yields an unbiased estimator when the clusters are disjoint, and the performance of CB is expected to deteriorate as the number of inter-cluster edges increase. We generate graphs with 50000 nodes from some well studied graphical models \cite{barabasi1999emergence, ErdosRenyi59} with 5 equal-sized clusters and varying density (i.e.\ average degree) and varying inter-cluster edge proportion (see Appendix \ref{A-subsection: graph generation}). We directly use these clusters (instead of estimating them) for graph-cluster randomization to avoid the dependency on the performance of a graph clustering technique (see Appendix \ref{A-subsection: CB}). While the chosen network topology (i.e.\ the presence of clusters) and the use of the known clusters is somewhat favorable to CB, the presence of clusters does not provide any advantage to the OASIS design and estimates.


We generate data under moderate violations of Assumption \ref{assumption: total exposure}. More precisely, we set $Z_i(T_E^{(r)})$ as $Z_i(\delta, T_E^{(r)}) = \sum_{j \in Ch(i)} \alpha_{ij}~ (p_{ij})^{\delta}$ for $r \in \{0, 1\}$, $\delta \in \{1/4,~1/2,~1\}$. \textbf{Observe that the strength of the network effect $Z_i^{(r)}(\delta, T_E^{(r)})$ decreases as $\delta$ increases}. We choose $\alpha_{ij} = 1$ for constructing $T_E^*$ using Algorithm \ref{alg: exp design}. The details on the graph and data generation and the treatment definitions are given in the Appendix.  


\begin{table}[!ht]
\setlength{\tabcolsep}{3pt}
\centering
\footnotesize
\begin{subtable}[t]{0.05\textwidth}
\begin{tabular}{c}
 $\bar{d}_{\mathcal{G}}$  \\ 
  \hline
 21 \\
 55  \\
 120 \\
\end{tabular}
\end{subtable}
\hfill
\begin{subtable}[t]{0.3\textwidth}
\begin{tabular}{|cc|cc|}
 $\tau_0$ & $\tau_1 - \tau_0$ & OASIS & CB \\ 
  \hline
9.81 & 0.042 & 0.958 & 0.974 \\ 
9.73 & 0.064 & 0.950 & 0.948 \\ 
9.61 & 0.089 & 0.956 & 0.914  
\end{tabular}
\caption{$\delta = 1/4$}
\end{subtable}
\hfill
\begin{subtable}[t]{0.3\textwidth}
\begin{tabular}{|cc|cc|}
 $\tau_0$ & $\tau_1 - \tau_0$ & OASIS & CB \\ 
  \hline
9.10 & 0.165 & 0.968 & 0.976 \\ 
8.32 & 0.233 & 0.950 & 0.946 \\ 
7.55 & 0.234 & 0.938 & 0.592  
\end{tabular}
\caption{$\delta = 1/2$}
\end{subtable}
\hfill
\begin{subtable}[t]{0.3\textwidth}
\begin{tabular}{|cc|cc|}
 $\tau_0$ & $\tau_1 - \tau_0$ & OASIS & CB \\ 
  \hline
6.76 & 0.262 & 0.928 & 0.946 \\ 
5.84 & 0.137 & 0.954 & 0.938 \\ 
5.50 & 0.078 & 0.932 & 0.880 
\end{tabular}
\caption{$\delta = 1$}
\end{subtable}
\caption{Empirical coverage probability for 95\% confidence intervals, computed based on $n=500$ experiments. Higher coverage corresponds to higher accuracy in measurement. $\bar{d}_\Gcal$ denotes the average degree of the graph.
}
\label{table: coverage}
\end{table}

Table \ref{table: coverage} shows the empirical coverage probabilities defined as the proportion of times $\tau_1 - \tau_0$ lies in the corresponding 95\% confidence interval. Ideally, the empirical coverage probability should match $0.95$, a lower value would indicate a higher Type-1 error and a higher value would indicate a higher Type-2 error. The setting $\bar{d}_\Gcal$ is the most unfavorable setting for CB with the proportion of inter-cluster edges, and among all settings with $\bar{d}_\Gcal$, the setting with $\delta=1/2$ has the largest treatment effect. We suspect that CB failed to capture a significant portion of the network effect under these extreme conditions.

\begin{figure}[ht]
\centering
\includegraphics[width = .75\textwidth]{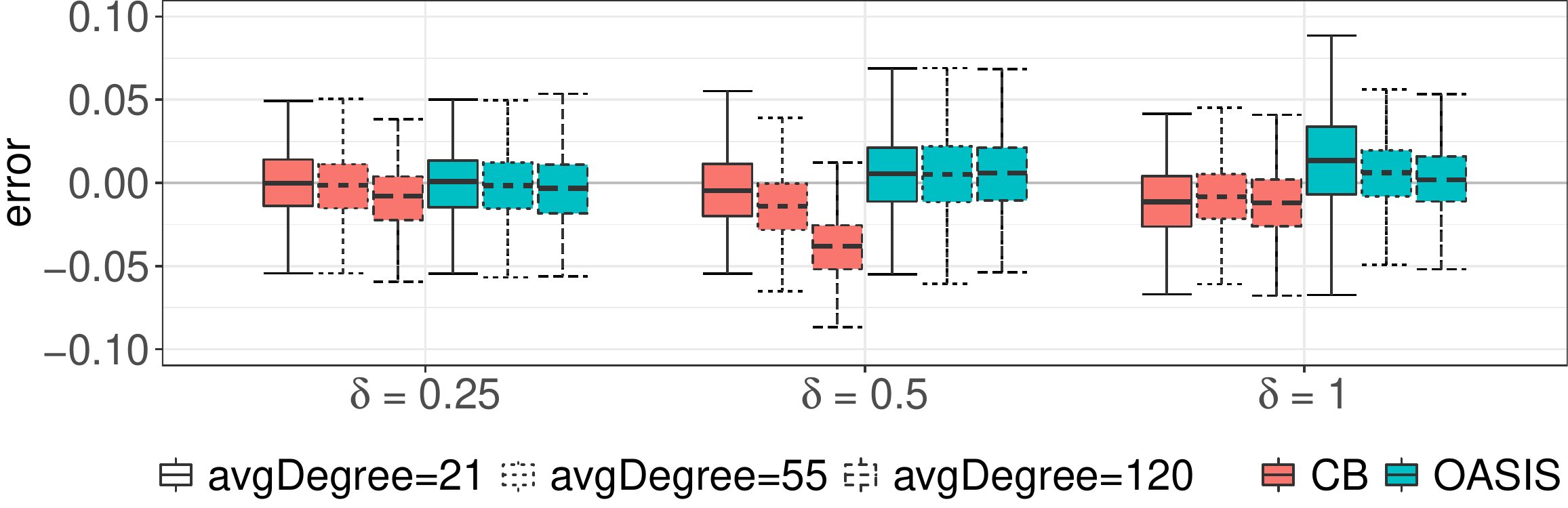}
\caption{
Box-plots of the estimation errors under different settings corresponding to $\delta \in \{1/4, 1/2, 1\}$ and avgDegree $= \bar{d}_{\mathcal{G}} \in \{21, 55, 120\}$ 
based on $500$ repeats.}
\label{fig: estimation error}
\end{figure}

We compare the errors of estimating $\tau_1 - \tau_0$ with the OASIS and the cluster-based estimators in Figure \ref{fig: estimation error}. We observe two important patterns:
\begin{enumerate}
\item The performance of the cluster-based method deteriorates as we move from a sparse and low impurity setting to a dense and high impurity setting, while the performance of OASIS remains the same or gets better as the density of the network increases. Note that with increasing density in the graph, the experiment design part of OASIS gets more flexibility to redistribute the exposure optimally, and the post-experiment adjustment part gets a larger sample size for density estimation (see Section \ref{A-subsection: density estimation} in the supplement).
\item Performance is better, and the difference between OASIS and the cluster-based method is higher when the treatment effect is larger (e.g., the middle set of experiments, where $\tau_1 - \tau_0$ is large).
\end{enumerate}

\textbf{\textit{Real World Experiments}: }
\label{sec:real-world}
We demonstrate an application of our method on the LinkedIn newsfeed, a content recommender system with 675M+ members. 
A content recommender system is often optimized for content consumers' experience. However, there has been a growing interest in developing recommender systems that additionally enhance content producers' experience \cite{TuEtAl19}. It is easy to see that the classical A/B testing framework won't be able to capture the enhancement in the producers' experience. We apply OASIS for A/B testing a new recommender system $M^{(1)}$ and a baseline model $M^{(0)}$. For details on the setup, please see Appendix \ref{A-subsection: experiment}. To estimate the $p_{ij}^{(0)}$ and $p_{ij}^{(1)}$ values, we ran a preliminary experiment for one week where we assigned $M^{(0)}$ and $M^{(1)}$ to members in $\Omega_0 \cup \Lambda_0 \cup \Gamma$ and $\Omega_1 \cup \Lambda_1$ respectively. The assignment of $M^{(r)}$ to member $i$ means that the LinkedIn newsfeed of member $i$ generated by $M^{(r)}$ as follows. An item-generator model creates a list of items for each member $i$, then each $item$ in the list receives a score $M^{(r)}(item, i)$, and finally the items are shown to member $i$ ranked by their scores. 

Let's denote the sum of the scores of all items of producer $k$ that are shown to member $i$ by $S^{(r)}(k, i)$. Then,
$$p_{ij}^{(r)} := \frac{S^{(r)}(i, j)}{\sum_{k \in Pa(j)} S^{(r)}(k, j)}.$$
Moreover, we define
$$
Z_i(T_E^{(r)}) := \frac{|\Omega|}{|\Omega_r\cup\Lambda_r|} \sum_{j \in Ch(i)\cap(\Omega_r\cup\Lambda_r)} p_{ij}^{(r)} ~~\text{for $r = 0, 1$}.
$$
We assume $\alpha_{ij} = 1$ and we obtain $\{p_{ij}^* : (i,j) \in E^*\}$ by minimizing $J(E^*)$ defined in Section \ref{sec:doe}, where $E^*$ is the set of all edges from $\Omega'$ to $C'$. We started running the main experiment by modifying the scores generated by $M^{(0)}$ with a multiplicative boost factors $b_{ij}$'s (see Appendix \ref{A-subsection: boost factor}) such that
\begin{align}\label{eq: boost factor}
\frac{b_{ij}S^{(0)}(i, j)}{\sum_{k \in Pa(j)} b_{kj}S^{(0)}(k, j)} = p_{ij}^{*},~~\text{for all incoming edges to members in $C'$}.
\end{align}
 We ran the main experiment for three weeks while updating $p_{ij}^*$'s and $b_{ij}$'s periodically (twice a week) to incorporate the dynamic nature of the underlying network. The OASIS estimates (based on the main experiment) showed significant impact in the following metrics that are related to the producer side experience of a member:
\begin{enumerate}
\item[(i)] \textit{Daily Unique Contributors}: Number of unique members in a day who contribute on (i.e., either like, comment or share) items on the newsfeed, showed a lift of $\boldsymbol{+0.52\%}$ with a $p$-value of $0.041$.
\item[(ii)] \textit{Members in Active Community}: Number of members who contribute to or receive contributions from at least $5$ members, showed a lift of $\boldsymbol{+1.14\%}$ with a $p$-value of $0.016$.
\item[(iii)] \textit{Feed Interaction received Uniques}: Number of unique members who receive at least one contribution, showed a lift of $\boldsymbol{+1.34\%}$ with a $p$-value of $0.001$.
\end{enumerate}

The above results compares the responses of members in $\Omega_1$ with the members in $\Omega_0$. A similar comparison for $\Lambda_1$ with the members in $\Lambda_0$ turns out to be statistically insignificant for these metrics, indicating the presence of network effect.

\vspace{-0.3cm}
\section{Discussion}
\label{sec:discussion}
\vspace{-0.3cm}

We have presented a two-step method, called OASIS, for estimating the average treatment effect for a class of treatments represented by modifications of edge weights in networks with interference. First, we design an experiment by optimally allocating a ``modified'' treatment exposure to a set of randomly selected consumer-producer pairs by solving a large-scale quadratic program. Secondly, we apply an importance sampling correction for estimating the average treatment effect, which corrects for the design bias induced by the violations of assumptions and/or the restrictions applied for risk control. The main advantage of OASIS compared to cluster-based methods is that it tends to perform better for dense networks that cannot be decomposed into reasonably isolated clusters, while the cluster-based methods would have an advantage over OASIS for sparse networks. An interesting future work could be to combine a cluster-based approach with OASIS in order to gain additional robustness and efficiency. Finally, note that we can handle dynamic networks where the edge weights $p_{ij}$ change over the experiment time window by updating the OASIS design periodically. In some cases, $p_{ij}$ may change very quickly with time or depend on the current context (e.g., drivers and riders are matched based on current proximity). Therefore, considering the distribution of $p_{ij}$ (instead of a point estimate) for designing the OASIS treatment allocation can be another interesting future work.





\section*{Broader Impact}
Online A/B testing plays the most crucial role in product development in the internet industry by deciding which recommender system is an optimal choice. The optimality criterion often focuses on the experience of the direct users of the recommender system, e.g., the viewers of a social media newsfeed or the recruiters searching for potential candidates. This choice can have a potentially negative impact on the people indirectly affected by these recommendation choices. For example, the less popular content creators or a certain group of jobseekers might be getting less exposure than the already popular creators or candidates, leading to a ``rich get richer" ecosystem. There has been a growing interest in enhancing the experience of the people that are indirectly affected. However, a significant obstacle to achieving this goal is that these indirect effects are challenging to measure in an online A/B testing experiment. In this paper, we take a step toward solving this problem, which can potentially lead to product developments focusing on the holistic improvement of the underlying ecosystem.

\section*{Acknowledgment}
We would sincerely like to thank Guillaume Saint-Jacques, Iavor Bojinov, Ya Xu, Lingjie Weng, Shipeng Yu, Hema Raghavan, Romer Rosales and Deepak Agarwal for their support and detailed insightful feedback during the development of this system. We would also like to thank the anonymous reviewers for their helpful comments which has significantly improved the paper.

Finally, none of the authors received any third-party funding for this submission and there is no competing interest other than LinkedIn Corporation to which all authors are affiliated. 

\def\bibfont{\small}
\bibliography{network}
\bibliographystyle{abbrv}

\vfill\pagebreak
\appendix
\section{Appendix}
\label{sec:appendix}

\subsection{Risk Control}\label{A-subsection: risk control}
We find $\{p_{ij}^* : (i,j) \in E^*\}$ by minimizing 
\begin{align*}
J(E^*) =\sum_{r = 0}^m\sum_{i \in \Omega_r}\big(Z_i(T_E^{(r)}) - Z_i'(T_{E}^*) - \sum_{(i,j) \in E_{i,C'}} \alpha_{ij} p_{ij}\big)^2.
\end{align*}
under some constraints controlling the risk of overexposure or underexposure of parents of $j  \in C'$. 
We choose $0 \leq R_{\min} \leq 1 \leq R_{\max}$ and $0 \leq S_{\min} \leq 1 \leq S_{\max}$ to control the changes in $\{p_{ij}^{base} : (i,j) \in E^*\}$ and $\{p_{ij}^{base} : j \in C',~(i,j) \in E \setminus E^*\}$ respectively by defining $\{p_{ij}^* : (i, j) \in E^*\}$ as a solution of the following quadratic optimization problem with linear constraints:
\begin{equation} 
\label{eq: QP}
\begin{aligned}
\underset{p_{ij}}{\text{Min}}~  J(E^*) \; \text{such that: } \; &p_{ij}^{base}R_{\min} \leq p_{ij} \leq p_{ij}^{base}R_{\max}, \\
& S_{\min} \leq \frac{1 - \sum_{i \in Pa(j)\cap\Omega^{\prime}} p_{ij}}{1 - \sum_{i \in Pa(j)\cap\Omega^{\prime}} p_{ij}^{base}} \leq S_{\max},
 0 \leq \sum\nolimits_{i \in Pa(j)\cap\Omega^{\prime}} p_{ij} \leq 1.
\end{aligned}
\end{equation}
Note that the last two constraints can be combined as $\ell_j \leq \sum\nolimits_{i \in Pa(j)\cap\Omega^{\prime}} p_{ij} \leq u_j$ for $\ell_j = \max\{0, ~1 - S_{\max} (1 - \sum_{i \in Pa(j)\cap\Omega^{\prime}} p_{ij}^{base})\}$ and $u_j = \min\{1, ~1 - S_{\min} (1 - \sum_{i \in Pa(j)\cap\Omega^{\prime}} p_{ij}^{base})\}$.

\subsection{Scaling the Optimization}\label{A-subsection: scaling}

We split the set of consumers $C'$ into disjoint sets $\mathcal{S}_k$ of roughly equal sizes, such that $C' = \cup_{k=1}^K \mathcal{S}_k$. Since this induces a natural partition in the constraint space of \eqref{eq: QP}, it is easy to see that any candidate solution $\{p_{ij}^{old} : (i,j) \in E^*\}$ of \eqref{eq: QP} can be improved by updating $\{p_{ij}^{old} :  (i,j) \in E^*, ~ j \in \mathcal{S}_k\}$ with the solution of the following optimization problem for all $k \in \{1,\ldots,K\}$. 
For $i \in \Omega_r$ let $$\Delta_{ik,r}: = Z_i(T_E^{(r)}) - Z_i'(T_E^*) - \sum_{(i,j) \in E_{i,C'},~j\notin \mathcal{S}_k }\alpha_{ij}p_{ij}^{old}.$$
\begin{align}
\label{eq:small_qp}
& \underset{p_{ij} }{\text{Minimize}} \nonumber \\
&   \sum_{r=0}^{m}~~\sum_{i \in \Omega_r \cap (\cup_{j \in\mathcal{S}_k} Pa(j))} \bigg(\Delta_{ik,r}  - \sum_{(i,j) \in E_{i,C'},~j\in \mathcal{S}_k}\alpha_{ij}p_{ij} \bigg)^2 \nonumber \\ 
& \text{such that} \nonumber \\
& R_{\min}~\times~ p_{ij}^{base} \leq p_{ij} \leq R_{\max}~\times~ p_{ij}^{base}, \nonumber\\
&S_{\min} \leq \frac{1 - \sum_{i \in Pa(j)\cap\Omega^{\prime}} p_{ij}}{1 - \sum_{i \in Pa(j)\cap\Omega^{\prime}} p_{ij}^{base}} \leq S_{\max},\text{and} \nonumber \\
& 0 \leq \sum_{i \in Pa(j)\cap\Omega^{\prime}} p_{ij} \leq 1.
\end{align}
Note that \eqref{eq:small_qp} and \eqref{eq: QP} have the same set of constraints for each $j \in \mathcal{S}_k$.

\begin{algorithm}[!h]
    \caption{Solving for Optimal $p_{ij}$}
    \label{algo:qp_solution}
    \begin{algorithmic}[1]
        \Require Number of Groups $K$, Outer Iteration Limit $maxIter$
       \State Create $\mathcal{S}_k$ s.t. $\mathcal{S} = \cup_{k=1}^{K} \mathcal{S}_k$ s.t. $| \mathcal{S}_k | \approx |\mathcal{S}|/K$ for all $k$;
       \State Set $p_{ij}^{old} = p_{ij}^{base}$ and $p_{ij}^* = p_{ij}^{base}$;
        \For {$t = 1, \ldots, maxIter$}
        	\For {$k = 1, \ldots, K$}
		\State Solve the optimization problem as in \eqref{eq:small_qp} to obtain $p_{ij}^*$ for all $j \in \mathcal{S}_k, i \in \mathbf{Ne}(j) \cap \Omega'$;
		\State Set $p_{ij}^{old} =  p_{ij}^*$ for all $j \in \mathcal{S}_k, i \in \mathbf{Ne}(j) \cap \Omega'$;
	\EndFor
        \EndFor
        \State Return $\{p_{ij}^* : i \in \Omega', j \in \mathcal{S}\}$.
\end{algorithmic}
\end{algorithm}

We apply this strategy in Algorithm \ref{algo:qp_solution} to solve the overall optimization problem. We start with a feasible candidate $p_{ij}^{old} = p_{ij}^{base}$ and run an iterative scheme to update $p_{ij}$ using \eqref{eq:small_qp} as we loop over each $k = \{1, \ldots, K\}$. 
Once this inner loop completes, we get the full next best $\{p_{ij} : (i,j) \in E^*\}$. We continue the outer loop till convergence. By doing this iterative scheme we are able to solve much larger problems, since size of the each optimization problem $n_k = \sum_{j \in \mathcal{S}_k}|Pa(j) \cap \Omega'|$ is much smaller than $n = \sum_{j \in C'}|Pa(j) \cap \Omega'|$. In fact, the worst-case complexity of the iterative method with $nIter$ outer iterations is given by $\mathcal{O}(maxIter \times \sum_{k=1}^{K}n_k^3)$. This is much better than the $\mathcal{O}(n^3)$ worst-case complexity of \eqref{eq: QP} when $maxIter = \mathcal{O}(1)$ and $K^{1/3} \times (\max_k  n_k) << n$. Although the convergence cannot be guaranteed for $maxIter=\mathcal{O}(1)$, we were able to obtain reasonable solutions in simulations and real applications by using $maxIter=10$. Finally, note that for a fixed value of $maxIter$, $K$ represents the speed-accuracy trade-off, i.e. the quality of the solution is expected to be better for a smaller value of $K$ but the computational efficiency is expected to be better for a larger value of $K$.

We illustrate the computational efficiency and the accuracy of the iterative algorithm with the following simulation setup. We consider 5 settings with $n = 10000, 20000, 30000, 40000 \text{ and } 50000$ variables. For each setting, we generate $n$ producers and $n$ consumers by sampling with replacement from $\Omega = \{1,\ldots, 100\}$. Let $\Omega'$ denote the set of all selected producers and let $E^*$ denote the set of selected producer-consumer pairs. For each producer $i \in \Omega'$, we generate $h_i$ from $Uniform[0, n/100]$ and for each producer-consumer pair $(i, j) \in E^*$, we generate $p_{ij}^{base} ~\sim~ Uniform[0, 1]$. We solve the following optimization problem:
\begin{align}\label{eq: simulation QP}
&\underset{p_{ij}}{\text{Minimize}}~  \sum_{i \in \Omega'} \bigg(h_i - \sum_{j \in Ch(i)} p_{ij}\bigg)^2 \; \text{such that: } \nonumber \\
 & \quad 0.1 \times p_{ij}^{base} \leq p_{ij} \leq 10 \times p_{ij}^{base},\quad \text{and} \quad 0.2 \times \sum_{i \in Pa(j)} p_{ij}^{base} \leq \sum_{i \in Pa(j)} p_{ij} \leq 5 \times \sum_{i \in Pa(j)} p_{ij}^{base},
\end{align}
where the $Pa(i)$ and $Ch(i)$ are defined on the producer-consumer graph $(\Omega,~ E^*)$.

\begin{figure}[!ht]
\centering
\includegraphics[width = \textwidth]{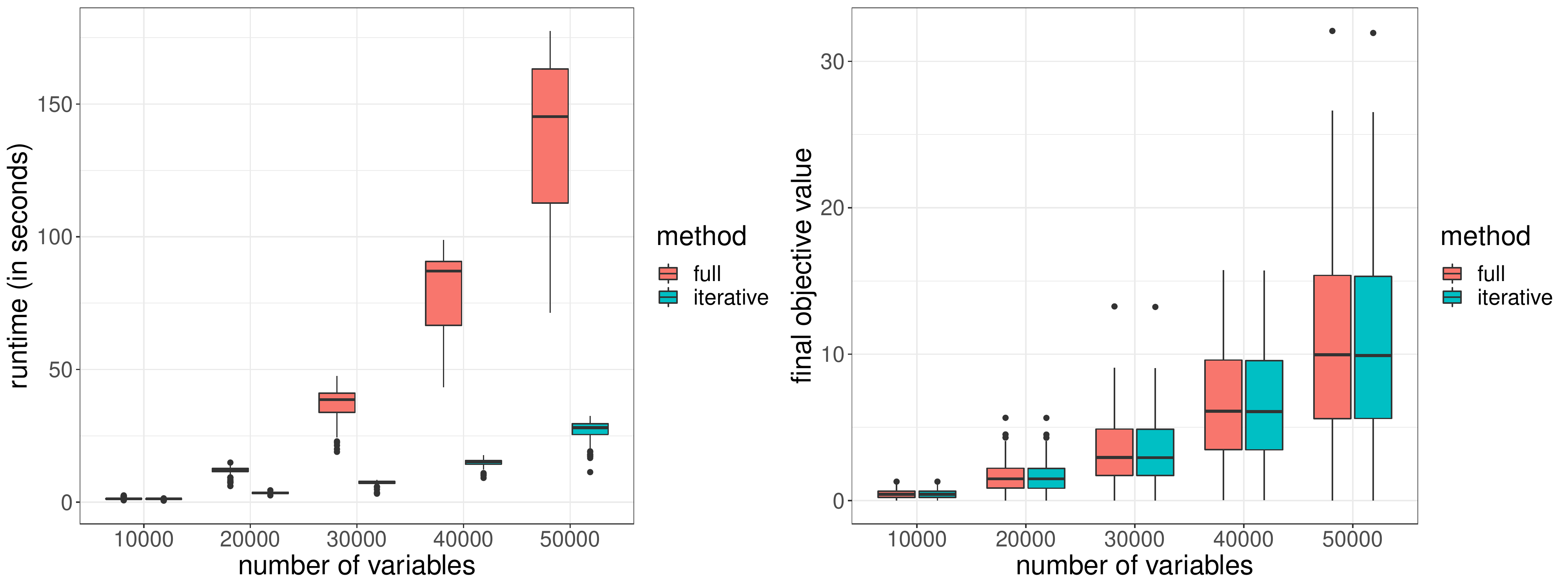}
\caption{Boxplots of runtimes and the final objective values for solving \eqref{eq: simulation QP} directly using a QP solver (method: full) and for solving \eqref{eq: simulation QP} using Algorithm \ref{algo:qp_solution} with $K = 10$ and $maxIter = 2$ (method: iterative). We used the Operator Splitting method \cite{osqp} in \texttt{R} package \textbf{rosqp} for solving each subproblem in Algorithm \ref{algo:qp_solution} as well as for solving the full QP.}
\label{fig: iterative QP}
\end{figure}

Figure \ref{fig: iterative QP} shows that our iterative QP algorithm (Algorithm \ref{algo:qp_solution}) can outperform the full QP solver in terms of computational efficiency without sacrificing the solution quality (measured by the final objective value).



\subsection{Robustness}\label{A-subsection: robustness}
\begin{lemma}\label{theorem: random perturbation}
Let $i$ be a randomly chosen member in $\Omega$ and let $\{p_{ki}^{(r)} : k \in Pa(i)\}$ and $\{\alpha_{ij},~ p_{ij}^{(r)} : j \in Ch(i)\}$ be well-defined random variables. Assume
$$Y_i(T_E^{(r)}) = g\bigg(\sum_{j \in Ch(i)} \alpha_{ij} p_{ij}^{(T_E^{(r)})}, \mathbf{W}_i(T_E^{(r)}) \bigg) + \epsilon_i,
$$ where $g$ is a differentiable function with respect to the first coordinate, $\mathbf{W}_i(T_E^{(r)}):= \{p_{ki}^{(r)} : k \in Pa(i)\}$, and $\epsilon_i$ does not depend on $T_E^{(r)}$. Let $\{U_{ij} : j \in Ch(i) \}$ be a set of i.i.d.\ random variables independent of $\{(\beta_{ij}p_{ij}^{(r)}, \alpha_{ij} / \beta_{ij} ): j \in Ch(i) \}$. Define 
\[
U_{ij}^* = \frac{U_{ij} \sum_{k \in Ch(i)} \beta_{ik} p_{ik}^{(r)}}{\sum_{k \in Ch(i)} \beta_{ik} p_{ik}^{(r)} U_{ik}}
\] such that $\{U_{ij}^* : j \in Ch(i) \}$ are identically distributed random variables satisfying 
\begin{align}\label{eq: matching 3}
\sum_{j \in Ch(i)} \beta_{ij} p_{ij}^{(r)} U_{ij}^* = \sum_{j \in Ch(i)} \beta_{ij} p_{ij}^{(r)}.
\end{align} 
If there exist an experimental design $T_E^*$ such that $\mathbf{W}_i(T_E^*) = \mathbf{W}_i(T_E^{(r)})$ and $p_{ij}^{*} = p_{ij}^{(r)} U_{ij}^*$, then $\Exp[Y_i(T_E^*)] = \Exp[Y_i(T_E^{(r)})].$
\end{lemma}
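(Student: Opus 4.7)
The plan is to reduce $\Exp[Y_i(T_E^*)] = \Exp[Y_i(T_E^{(r)})]$ to a statement about the scalar ``total exposure'' argument of $g$, and then use the properties of the $U_{ij}^*$'s to close the loop. First I would substitute the structural form of $Y_i$, the hypothesis $\mathbf{W}_i(T_E^*) = \mathbf{W}_i(T_E^{(r)})$, and the relation $p_{ij}^* = p_{ij}^{(r)} U_{ij}^*$ to obtain
\begin{equation*}
Y_i(T_E^*) = g\Bigl(\sum_{j \in Ch(i)} \alpha_{ij}\, p_{ij}^{(r)}\, U_{ij}^*,\ \mathbf{W}_i(T_E^{(r)})\Bigr) + \epsilon_i.
\end{equation*}
Because $\epsilon_i$ does not depend on the design and the second argument of $g$ already matches, the claim reduces to showing that the random scalar $S_i^* := \sum_j \alpha_{ij} p_{ij}^{(r)} U_{ij}^*$, once fed into $g$ and averaged, reproduces $g$ evaluated at the deterministic target $S_i := \sum_j \alpha_{ij} p_{ij}^{(r)}$.

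Next I would analyse $S_i^*$ by setting $a_{ij} := \alpha_{ij}/\beta_{ij}$ and $b_{ij} := \beta_{ij} p_{ij}^{(r)}$, so that $S_i = \sum_j a_{ij} b_{ij}$ and, after substituting the definition of $U_{ij}^*$,
\begin{equation*}
S_i^* = \frac{\sum_k b_{ik}}{\sum_k b_{ik} U_{ik}} \sum_{j} a_{ij} b_{ij}\, U_{ij}.
\end{equation*}
Two structural facts should now be exploited: (a) the design identity $\sum_j b_{ij} U_{ij}^* = \sum_j b_{ij}$ from \eqref{eq: matching 3}; and (b) the identical-distribution property of the $U_{ij}^*$ asserted in the lemma, which, combined with (a), forces $\Exp[U_{ij}^* \mid \{b_{ik}\}]$ to be the same constant across $j$ and hence equal to $1$. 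Using the independence of the $U_{ij}$'s from $\{(b_{ij}, a_{ij})\}$ (so that $U_{ij}^*$ is conditionally independent of $\{a_{ij}\}$ given $\{b_{ij}\}$), linearity of expectation then yields $\Exp[S_i^* \mid \{a_{ij}, b_{ij}\}] = \sum_j a_{ij} b_{ij} = S_i$.

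The main obstacle is Stage three: converting the matching of conditional means $\Exp[S_i^*] = S_i$ into the matching of $\Exp[g(S_i^*, \mathbf{W}_i)] = \Exp[g(S_i, \mathbf{W}_i)]$, since $g$ is only assumed differentiable, not linear. Here I would lean on the full strength of the ``identically distributed'' hypothesis rather than merely the first-moment identity: the ratio construction of $U_{ij}^*$ together with the $Ch(i)$-exchangeability inherited from the i.i.d.\ $U_{ij}$'s should give an equality of joint distributions (not just of first moments) between the scaled family $\{a_{ij} b_{ij} U_{ij}^*\}_j$ and the unscaled family $\{a_{ij} b_{ij}\}_j$ after averaging over the $U$-randomness, from which the $g$-equality follows by taking a further expectation. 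The delicate part will be carefully justifying this distributional reduction without implicitly assuming linearity of $g$; the differentiability hypothesis on $g$ is the natural handle, and I expect it to appear via a Taylor-type argument whose higher-order fluctuations cancel by the symmetry built into $U_{ij}^*$.
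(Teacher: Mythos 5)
Your Stages 1 and 2 match the paper's setup, but the step you yourself flag as the main obstacle is exactly where the proposal breaks down, and the fix you sketch does not work. The claimed ``equality of joint distributions between $\{a_{ij}b_{ij}U_{ij}^*\}_j$ and $\{a_{ij}b_{ij}\}_j$'' is false: conditionally on the graph data, $S_i=\sum_j a_{ij}b_{ij}$ is a constant while $S_i^*=\sum_j a_{ij}b_{ij}U_{ij}^*$ retains genuine randomness from the $U_{ij}$'s (its conditional variance is strictly positive whenever the $U_{ij}$'s are non-degenerate and $|Ch(i)|\geq 2$), so the two families cannot be equidistributed. Likewise, matching conditional first moments of $S_i^*$ cannot by itself yield $\Exp[g(S_i^*,\cdot)]=\Exp[g(S_i,\cdot)]$ for nonlinear $g$, and there is no ``higher-order Taylor cancellation by symmetry'' in the actual argument.

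What the paper does instead is apply the mean value theorem to write, exactly,
\[
g(S_i^*,\mathbf{W}_i)-g(S_i,\mathbf{W}_i)=g'(x_i^*,\mathbf{W}_i)\sum_{j\in Ch(i)}\alpha_{ij}p_{ij}^{(r)}(U_{ij}^*-1)
\]
for a random intermediate point $x_i^*$. Setting $V_{ij}=g'(x_i^*,\mathbf{W}_i)(U_{ij}^*-1)$, the design identity \eqref{eq: matching 3} gives the \emph{pointwise} relation $\sum_j \beta_{ij}p_{ij}^{(r)}V_{ij}=0$; the common random factor $g'(x_i^*,\mathbf{W}_i)$ passes through this sum unharmed, which is the whole point of using the mean value theorem rather than a naive first-moment computation. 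The identical conditional distribution of the $U_{ij}^*$'s given $\mathbf{Z}_i^*=\{(\beta_{ij}p_{ij}^{(r)},\alpha_{ij}/\beta_{ij}):j\in Ch(i)\}$ is then invoked to conclude that $\Exp[V_{ij}\mid\mathbf{Z}_i^*]$ does not depend on $j$; combined with the weighted zero-sum with positive, $\mathbf{Z}_i^*$-measurable weights, this forces $\Exp[V_{ij}\mid\mathbf{Z}_i^*]=0$ for every $j$, and re-weighting by the $\mathbf{Z}_i^*$-measurable quantities $\alpha_{ij}p_{ij}^{(r)}$ and taking expectations finishes the proof. Your Stage 2 observation that $\Exp[U_{ij}^*\mid\cdot]=1$ is the special case $g'\equiv 1$ of this argument; the missing idea is that the cancellation must be carried out \emph{after} multiplying by the common random factor $g'(x_i^*,\mathbf{W}_i)$, using the exact constraint together with the exchangeability of the $V_{ij}$'s, rather than via a distributional identity for $S_i^*$ itself.
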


\begin{proof}
Since $g$ is differentiable with respect to the first coordinate, and $\mathbf{W}_i(T_E^*) = \mathbf{W}_i(T_E^{(r)})$, by applying mean value theorem, we obtain
\begin{align*}
&\Exp[Y_i(T_E^*)] - \Exp[Y_i(T_E^{(r)})] \\
=&~\Exp\Bigg[g \Bigg( \sum_{j \in Ch(i)} \alpha_{ij} p_{ij}^{(r)} U_{ij}^*,~ \mathbf{W}_i(T_E^{(r)}) \Bigg) \\
& \qquad \qquad  \qquad  - g \Bigg(\sum_{j \in Ch(i)} \alpha_{ij} p_{ij}^{(r)},~ \mathbf{W}_i(T_E^{(r)}) \Bigg)\Bigg] \\
=&~\Exp \Bigg[g'(x_i^*,~ \mathbf{W}_i(T_E)) \sum_{j \in Ch(i)} \alpha_{ij} p_{ij}^{(r)} (U_{ij}^* - 1) \Bigg],
\end{align*}
where $g'$ denotes the partial derivative of $g$ with respect to the first coordinate, and $x_i^*$ depends on $\{(\alpha_{ij},~ p_{ij}^{(r)},~ U_{ij}^*) : j \in Ch(i) \}$. We complete the proof by showing that the conditional expectation of $$V_{ij}(T_E^{(r)}) := g'(x_i^*,~ \mathbf{W}_i(T_E)) (U_{ij}^* - 1)$$ given $\mathbf{Z}_{i}(T_E^{(r)}) := \{\alpha_{ij} p_{ij}^{(r)} : j \in Ch(i) \}$ is zero. 

From $\sum_{j \in Ch(i)} \beta_{ij} p_{ij}^{(r)} U_{ij}^* = \sum_{j \in Ch(i)} \beta_{ij} p_{ij}^{(r)}$, 
we have
\begin{align}\label{eq: intermediate result}
\sum_{j \in Ch(i)} \beta_{ij} p_{ij}^{(T_E)}V_{ij}(T_E)   = 0. 
\end{align}
It follows from the definition of $U_{ij}^*$ and the independence of $\{U_{ij} : j \in Ch(i) \}$ and $\mathbf{Z}_i^*(T_E) = \{(\beta_{ij}p_{ij}^{(T_E)}, \alpha_{ij} / \beta_{ij} ): j \in Ch(i) \}$ that $\{U_{ij}^* : j \in Ch(i) \}$ is a set of identically distributed random variables conditionally on $\mathbf{Z}_i^*(T_E)$. Therefore, the conditional expectation of $V_{ij}(T_E)$ given $\mathbf{Z}_i^*(T_E)$ is identical for all $j \in Ch(i)$. Thus it follows from \eqref{eq: intermediate result} that
$$\Exp[V_{ij}(T_E) \mid \mathbf{Z}_i^*(T_E)] = 0~\text{for all $j \in Ch(i)$}.$$ 
Since $\alpha_{ij}p_{ij}^{(r)} = (\alpha_{ij} / \beta_{ij}) \times \beta_{ij}p_{ij}^{(r)}$, we have
\begin{align*}
&~\Exp[V_{ij}(T_E) \mid \mathbf{Z}_i(T_E)] \\
= &~\Exp[\Exp[V_{ij}(T_E) \mid \mathbf{Z}_i^*(T_E)] \mid \mathbf{Z}_i(T_E)]] = 0.
\end{align*}
This completes the proof, since
\begin{align*}
&~\Exp[Y_i(T_E^*)] - \Exp[Y_i(T_E^{(r)})] \\
=&~ \Exp \Bigg[ \sum_{j \in Ch(i)} \alpha_{ij} p_{ij}^{(T_E)}~V_{ij}(T_E) \Bigg] \\
=&~ \Exp \Bigg[ \sum_{j \in Ch(i)} \alpha_{ij} p_{ij}^{(T_E)}~ \Exp[V_{ij}(T_E) \mid \mathbf{Z}_i(T_E)] \Bigg] = 0.
\end{align*}
\end{proof}

\subsection{Proof of Theorem \ref{theorem: unbiased}}\label{A-subsection: importance sampling}
It follows from Assumption \ref{assumption: balancing variable} that
\[
\Exp[Y_i(T_E^*) \mid Z_i(T_E^*) = z] = \Exp[Y_i(z, \mathbf{W}_i(T_E^*))],
\]
where $\mathbf{W}_i(T_E^*) = \{p_{ki}^* : k \in Pa(i)\}$. By design, $\mathbf{W}_i(T_E^*) = \mathbf{W}_i(T_E^{(r)})$ for all $i \in \Omega_r$.
Therefore, for $i \in \Omega_r$,
\begin{align*}
&~\Exp\Bigg[Y_i(T_E^*) \frac{f_r(Z_i(T_E^*))}{f_r^*(Z_i(T_E^*))}\Bigg] \\
=&~\int  \Exp\Bigg[Y_i(z, \mathbf{W}_i(T_E^{(r)}))\frac{f_r(z)}{f_r^*(z)} \mid Z_i(T_E^*) = z\Bigg] f_r^*(z) ~dz\\
=&~\int  \Exp\Bigg[Y_i(z, \mathbf{W}_i(T_E^{(r)}))\Bigg] f_r(z) ~dz\\
=&~ \Exp[Y_i(T_E^{(r)})] = \tau_r.
\end{align*}
This completes the proof, since we have $$\Exp[\hat{\tau}_r] = \frac{1}{\Omega_r} \sum_{i \in \Omega_r} \Exp[Y_i(T_E^{(r)})] = \tau_r.$$ \qed

\subsection{Density Estimation}\label{A-subsection: density estimation}
Since we observe $\{Z_i(T_E^*) : i \in \Omega_r\}$, any parametric or non-parametric density estimation method can be applied for estimating $f_r^*$. For estimating $f_r$, we make the following additional assumptions:

\begin{assumption}\label{assumption: density estimation}
Assume that $f_r(z) \propto f_r^*((\sigma_r^*(z + \mu_r^*) - \mu_r) / \sigma_r)$ for all $r \in [m]$, where $\mu_r = \Exp{Z_i(T_E^{(r)})}$, $\mu_r^* = \Exp{Z_i(T_E^*)}$, $\sigma_r^2 = \Var[Z_i(T_E^{(r)})]$ and $\sigma_r^{*2} = \Var[Z_i(T_E^*)]$.
\end{assumption}

Assumption \ref{assumption: density estimation} states that the functional form of the densities of $Z_i(T_E^*)$ and $Z_i(T_E^{(r)})$ are identical, except for the mean and the variance. This allows us to obtain an estimate of $f_r$ by adjusting the mean and the variance of an estimated $f_r^*$. In Lemma \ref{theorem:variance}, we provide consistent estimators of the first two moments of $Z_i(T_E^{(r)})$ from the data $\{Z_{ij}(T_E^{(r)}) : i \in \Omega_r, ~ j \in Ch(i)\cap(\Omega_r\cup\Lambda_r)\}$, where $\Omega_r$ and $\Lambda_r$ are as in Algorithm \ref{alg: exp design}. Finally, note that Assumption 3 is not a theoretical necessity, but a practical recommendation for avoiding overfitting in estimating the target density. 

\begin{lemma}
\label{theorem:variance}
Assume that 
\begin{align*}
\rho_i := \frac{|Ch(i)\cap(\Omega_r\cup\Lambda_r)|}{|Ch(i)|} ~~\text{and}~~
\rho_i' := \frac{(|Ch(i)\cap(\Omega_r\cup\Lambda_r) -1)|}{(|Ch(i)| - 1)}
\end{align*} are independent of $|Ch(i)|$, $\mu_{r|i} := \Exp[Z_i(T_E^{(r)}) \mid i]$ and $\sigma_{r|i}^2 :=\Var[Z_i(T_E^{(r)}) \mid i]$. We define
\begin{align*}
V_1^{(r)} &:= \sum_{i \in \Omega_r}\sum_{j \in Ch(i)\cap(\Omega_r\cup\Lambda_r)} Z_{ij}(T_E^{(r)}),\\
V_2^{(r)} &:= \sum_{i \in \Omega_r}\sum_{j \in Ch(i)\cap(\Omega_r\cup\Lambda_r)} Z_{ij}(T_E^{(r)})^2 ~~\text{and}~~\\
V_3^{(r)} &:= \sum_{i \in \Omega_r} \bigg( \sum_{j \in Ch(i)\cap(\Omega_r\cup\Lambda_r)} Z_{ij}(T_E^{(r)}) \bigg)^2.
\end{align*}
Then
\begin{align*}
\frac{V_1^{(r)}}{\sum_{i \in \Omega_r} \rho_i} \overset{a.s}{\longrightarrow} \Exp[Z_i(T_E^{(r)})]~~\text{and}~~ \frac{V_2^{(r)}}{\sum_{i \in \Omega_r} \rho_i} + \frac{V_3^{(r)} - V_2^{(r)}}{\sum_{i \in \Omega_r} \rho_i\rho_i'}  \overset{a.s}{\longrightarrow} \Exp[Z_i(T_E^{(r)})^2].
 \end{align*}
\end{lemma}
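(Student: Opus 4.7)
Since the $Z_{ij}$'s are i.i.d.\ by Assumption~\ref{assumption: balancing variable}, set $\mu := \Exp[Z_{ij}]$ and $\sigma^2 := \Var[Z_{ij}]$, so that $\mu_{r|i} = |Ch(i)|\mu$ and $\sigma_{r|i}^2 = |Ch(i)|\sigma^2$. The plan is: (i) compute conditional expectations of the three inner sums given $i$ and the realization of $\Omega_r\cup\Lambda_r$; (ii) apply the SLLN over $i\in\Omega_r$, using the stated independence of $(\rho_i,\rho_i')$ from $(|Ch(i)|,\mu_{r|i},\sigma_{r|i}^2)$ to factor limits; (iii) conclude by Slutsky for the ratios.

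For (i), using $|Ch(i)\cap(\Omega_r\cup\Lambda_r)| = \rho_i|Ch(i)|$ and $|Ch(i)\cap(\Omega_r\cup\Lambda_r)|-1 = \rho_i'(|Ch(i)|-1)$, the i.i.d.\ structure of $\{Z_{ij}\}$ yields (with $j,k$ running over $Ch(i)\cap(\Omega_r\cup\Lambda_r)$)
\begin{align*}
\Exp\Bigl[\textstyle\sum_{j} Z_{ij}\,\Big|\,i,\Omega_r,\Lambda_r\Bigr] &= \rho_i\mu_{r|i}, \\
\Exp\Bigl[\textstyle\sum_{j} Z_{ij}^2\,\Big|\,i,\Omega_r,\Lambda_r\Bigr] &= \rho_i|Ch(i)|(\sigma^2+\mu^2), \\
\Exp\Bigl[\textstyle\sum_{j\ne k} Z_{ij}Z_{ik}\,\Big|\,i,\Omega_r,\Lambda_r\Bigr] &= \rho_i\rho_i'|Ch(i)|(|Ch(i)|-1)\mu^2.
\end{align*}

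For (ii), treat the members in $\Omega_r$ as an i.i.d.\ sample from the population. The per-member summands constituting $V_1^{(r)},V_2^{(r)},V_3^{(r)}-V_2^{(r)}$ are i.i.d.\ across $i\in\Omega_r$ because they depend on disjoint collections of $Z_{ij}$'s (each $Z_{ij}$ is tied to a distinct directed edge, so even when members share a child $j$, the relevant draws are different). The SLLN, combined with the factorizations $\Exp[\rho_i\mu_{r|i}]=\Exp[\rho_i]\Exp[\mu_{r|i}]$, $\Exp[\rho_i|Ch(i)|]=\Exp[\rho_i]\Exp[|Ch(i)|]$, and $\Exp[\rho_i\rho_i'|Ch(i)|(|Ch(i)|-1)]=\Exp[\rho_i\rho_i']\Exp[|Ch(i)|(|Ch(i)|-1)]$ permitted by the independence assumption, gives almost-sure limits $V_1^{(r)}/|\Omega_r|\to \Exp[\rho_i]\Exp[\mu_{r|i}]$, $V_2^{(r)}/|\Omega_r|\to \Exp[\rho_i]\Exp[|Ch(i)|](\sigma^2+\mu^2)$, $(V_3^{(r)}-V_2^{(r)})/|\Omega_r|\to \Exp[\rho_i\rho_i']\Exp[|Ch(i)|(|Ch(i)|-1)]\mu^2$, together with $\sum_{i\in\Omega_r}\rho_i/|\Omega_r|\to\Exp[\rho_i]$ and $\sum_{i\in\Omega_r}\rho_i\rho_i'/|\Omega_r|\to\Exp[\rho_i\rho_i']$. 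Slutsky (step (iii)) delivers the first claim directly, and for the second the two ratios sum to $\Exp[|Ch(i)|]\sigma^2+\Exp[|Ch(i)|^2]\mu^2=\Exp[\sigma_{r|i}^2+\mu_{r|i}^2]=\Exp[Z_i(T_E^{(r)})^2]$ after the shared $\mu^2\Exp[|Ch(i)|]$ contributions cancel.

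The only delicate bookkeeping is the decomposition of $\Exp[Z_i^2]$ into an ``i.i.d.\ variance'' piece and an ``i.i.d.\ mean-squared'' piece: $V_2^{(r)}$ collects the diagonal terms $Z_{ij}^2$ and estimates $\Exp[|Ch(i)|](\sigma^2+\mu^2)$, while $V_3^{(r)}-V_2^{(r)}$ collects the off-diagonal cross-terms $Z_{ij}Z_{ik}$ and estimates $\Exp[|Ch(i)|(|Ch(i)|-1)]\mu^2$; the $\rho_i$ denominator matches the former because $\rho_i|Ch(i)|$ diagonal terms are observed per member, and $\rho_i\rho_i'$ matches the latter because $\rho_i|Ch(i)|\cdot\rho_i'(|Ch(i)|-1)$ ordered off-diagonal pairs are observed. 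The independence assumption is precisely what makes those $\rho$-factors cancel cleanly against their denominators in the limit.
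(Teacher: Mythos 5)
Your proof is correct and follows essentially the same route as the paper's: compute the conditional moments of the per-member partial sums (diagonal terms for $V_2^{(r)}$, the $\rho_i\rho_i'|Ch(i)|(|Ch(i)|-1)$ ordered off-diagonal pairs for $V_3^{(r)}-V_2^{(r)}$), invoke the law of large numbers over $i\in\Omega_r$, and use the stated independence to factor the $\rho$-terms out of the limiting expectations so they cancel against the denominators. The only cosmetic difference is that you anchor the per-edge moments in the global i.i.d.\ assumption (writing $\mu_{r|i}=|Ch(i)|\mu$, $\sigma_{r|i}^2=|Ch(i)|\sigma^2$), whereas the paper works with member-specific per-edge moments $\mu_{r|i},\sigma_{r|i}^2$ and the identity $\rho_i^2d_i^2-\rho_id_i=\rho_i\rho_i'(d_i^2-d_i)$; the computations coincide up to this relabeling.
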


\begin{proof}
Let $d_i = |Ch(i)|$ denote the out-degree of member $i$. Then the expected value of $Z_i(T_E^{(r)})$ is given by
\begin{align}\label{eq: expected value}
\Exp[Z_i(T_E^{(r)})] = \underset{|\Omega| \to \infty}{\lim}~ \frac{1}{|\Omega|} \sum_{i \in \Omega} d_i\mu_{r|i} = \Exp[d_i\mu_{r|i}].
\end{align}
Therefore,
\begin{align*}
 \underset{|\Omega_r| \to \infty}{\lim}~ \frac{V_1^{(r)}}{\sum_{i \in \Omega_r} \rho_i} &=  \underset{|\Omega_r| \to \infty}{\lim}~ \frac{\frac{1}{|\Omega_r|} \sum_{i \in \Omega_r} \rho_i d_i \mu_{r|i}}{\frac{1}{|\Omega_r|} \sum_{i \in \Omega_r} \rho_i}\\
 &= \frac{\Exp[\rho_id_i\mu_{r|i}]}{\Exp[\rho_i]} = \Exp[Z_i(T_E^{(r)})]
\end{align*}
where the last equality follows from \eqref{eq: expected value} and the fact that $\rho_i$ and $\mu_{r|i}d_i$ are independently distributed.

Next, it follows from similar calculations that
\begin{align*}
\Exp[Z_i(T_E^{(r)})^2] &= \Exp[d_i\sigma_{r|i}^2 + d_i^2 \mu_{r|i}^2]\\
 \underset{|\Omega_r| \to \infty}{\lim}~ \frac{1}{|\Omega_r|} V_2^{(r)} &= \Exp[\rho_id_i\sigma_{r|i}^2 + \rho_id_i\mu_{r|i}^2],~\text{and}\\
  \underset{|\Omega_r| \to \infty}{\lim}~ \frac{1}{|\Omega_r|} V_3^{(r)} &= \Exp[\rho_id_i\sigma_{r|i}^2 + \rho_i^2d_i^2\mu_{r|i}^2]
\end{align*}
Furthermore, it is easy to verify that $\rho_i^2d_i^2 - \rho_id_i = \rho_i\rho_i'(d_i^2 - d_i)$. Therefore,
\[
  \underset{|\Omega_r| \to \infty}{\lim}~ \frac{1}{|\Omega_r|} [V_3^{(r)} - V_2^{(r)}] = \Exp[\rho_i\rho_i'(d_i^2 - d_i)\mu_{r|i}^2].
\]

Using the independence of $(\rho_i, \rho_i')$ and $(d_i, \mu_{r|i}, \sigma_{r|i})$, we obtain
\begin{align*}
\underset{|\Omega_r| \to \infty}{\lim}~ \frac{\frac{1}{|\Omega_r|}V_3^{(r)} - V_2^{(r)}}{\frac{1}{|\Omega_r|}\sum_{i \in \Omega_r} \rho_i\rho_i'} &= \Exp[d_i^2\mu_{r|i}^2 - d_i\mu_{r|i}^2]~\text{and}\\
\underset{|\Omega_r| \to \infty}{\lim}~ \frac{\frac{1}{|\Omega_r|}V_2^{(r)}}{\frac{1}{|\Omega_r|}\sum_{i \in \Omega_r} \rho_i} &= \Exp[d_i\sigma_{r|i}^2 + d_i\mu_{r|i}^2].
\end{align*} 
Hence, we have,
\begin{align*}
&\underset{|\Omega_r| \to \infty}{\lim} \frac{V_2^{(r)}}{\sum_{i \in \Omega_r} \rho_i} + \frac{V_3^{(r)} - V_2^{(r)}}{\sum_{i \in \Omega_r} \rho_i\rho_i'}\\
=& \Exp[d_i\sigma_{r|i}^2 + d_i\mu_{r|i}^2] + \Exp[d_i^2\mu_{r|i}^2 - d_i\mu_{r|i}^2] \\
= & \Exp[d_i\sigma_{r|i}^2 + d_i^2 \mu_{r|i}^2] = \Exp[Z_i(T_E^{(r)})^2].
\end{align*}
\end{proof}

\subsection{Bootstrap Variance Estimation}\label{A-subsection: bootstrap}
We draw $B$ random samples with replacement $\{\Omega^{\prime(1)}, \ldots, \Omega^{\prime(B)} \}$ of size $|\Omega'|$ from $\Omega'$. For each $t \in \{1,\ldots,B\}$ and $r = 0, \ldots,m$, we obtain $\Omega_r^{(t)}$ by selecting all elements of $\Omega^{\prime(t)}$ that are in $\Omega_r$. Then we obtain $\hat{\tau}^{(t)}_r$ by applying the density estimation and the estimate computation of Algorithm \ref{alg: OASIS} with $\Omega_r^{(t)}$ instead of $\Omega_r$. Finally, we obtain the bootstrap variance $\hat{\sigma}^2(T_E^{(r)})$ by computing the sample variance of $\hat{\tau}^{(1)}_r,\ldots, \hat{\tau}^{(B)}_r$. Finally, note that the bootstrap parameter $B$ should be large enough to have a stable variance estimator. We found B=1000 to be good enough for our simulations and real-world experiments.

\subsection{Simulation Setup Details}

\subsubsection{Graph Construction}\label{A-subsection: graph generation}
For our simulation, we want a graph composed of a set of pre-determined clusters with significantly high intra-cluster connectivity, and sparse inter-cluster connectivity. To that end, we first obtain an undirected graph $\mathcal{G}_{BA}^{undir} = (\Omega,~ E_{BA}^{undir})$ by combining $10$ randomly generated graphs with $5000$ vertices and average degree equals $\bar{d}_{BA}$, where each graph is generated according to the Barabasi-Albert model \cite{barabasi1999emergence} with the power of the preferential attachment equals $0.25$. Then we generate an Erd\"{o}s-R\'{e}nyi graph \cite{ErdosRenyi59} $\mathcal{G}_{ER}^{undir} = (\Omega, ~E_{ER}^{undir})$ with $50000$ vertices and average degree equals $\bar{d}_{ER}$, where all pairs of nodes have an equal probability of being connected. We obtain $\mathcal{G}^{undir} = (\Omega, ~E_{BA}^{undir} \cup E_{ER}^{undir})$. Finally, we transform $\mathcal{G}^{undir}$ to a directed graph $\mathcal{G}$ by replacing each undirected edge $i - j$ by two directed edges $i \to j$ and $j \to i$. Note that the average degree (of the incoming edges) in $\mathcal{G}$ is given by
$
\bar{d}_{\mathcal{G}} := \frac{1}{|\Omega|} \sum_{i \in \Omega} |Pa(i)| = \bar{d}_{BA} + \bar{d}_{ER}.
$

For the simulation study, we choose the following three settings:

\begin{table}[ht]
\footnotesize
\begin{center}
\begin{tabular}{cc|cc|cc}
 $d_{BA}$ & $d_{ER}$ & $\bar{d}_{\mathcal{G}}$ & $\bar{d}_{ER} / \bar{d}_{\mathcal{G}}$ & Sparsity & Impurity (inter-cluster edge proportion) \\ \hline
20 & 1 & 21 & 1/20 & High & Low  \\
 50 & 5 & 55 & 1/10 & Medium & Medium  \\
 80 & 40 & 120 & 1/3 & Low & High 
\end{tabular}
\caption{\label{table: simulation settings} Simulation Settings.}
\end{center}
\end{table}

\subsubsection{Cluster-based method (CB)}\label{A-subsection: CB}
We take advantage of the presence of the disjoint connected components in $\mathcal{G}_{BA}$ to define a ''oracle'' cluster-based method.
More precisely, we randomly choose two clusters to assign them $T_E^{(0)}$ and $T_E^{(1)}$ by defining
\begin{align*}
p_{ij}^* := \begin{cases}
p_{ij}^{(r)} & \text{ if } i \in Pa(j), ~ j \in \mathcal{H}_r,~ r = 0,1  \\
 p_{ij}^{base} & \text{ if } i \in Pa(j), ~ j \in \Omega \setminus (\mathcal{H}_0\cup \mathcal{H}_1).
\end{cases}
\end{align*}
where $\mathcal{H}_0$ and $\mathcal{H}_1$ are randomly chosen disjoint connected components of $\mathcal{G}_{BA}$. 

\subsubsection{Data Generation Mechanism}
For each directed edge $(i, j)$ in $\mathcal{G}$, we independently generate $U_{ij}$ and $V_{ij}$ from $Uniform[10, 100]$ and $Uniform[1,2]$ respectively. We define
$\alpha_{ij} = U_{ij}/d_j ~~\text{and}~~ p_{ij}^{base} = V_{ij}/\sum_{i \in Pa(j)} V_{ij},$ 
where $d_j = |Pa(j)|$ denotes the degree of node $j$. 
We define $T_E^{(0)} = \{p_{ij}^{(0)} : (i,j) \in E\}$ and $T_E^{(1)} = \{p_{ij}^{(1)} : (i,j) \in E\}$ where $p_{ij}^{(0)} = p_{ij}^{base}$ and $p_{ij}^{(1)} \propto p_{ij}^{base} \times \big( \frac{\alpha_{ij}}{\log(1 + d_i d_j)} \big)^{1/2}$, where $d_i$ is the degree of member $i$. 
Finally, we define
\begin{align*}
W_i^{(r)} &= \frac{1}{d_i} \sum_{k \in Pa(i)} \alpha_{ki}~  p_{ki}^{(r)}, \;\;\;\;  Z_i^{(r)}(\delta) = \sum_{j \in Ch(i)} \alpha_{ij}~ (p_{ij}^{(r)})^{\delta}~~\text{and} \\
Y_i(T_E^{(r)},\delta) &= g(W_i^{(r)} + Z_i^{(r)}(\delta) (1 + W_i^{(r)})) + \epsilon_i
\end{align*}
where $g(x) = 10/(1 + \exp(-x/10))$ and $\epsilon_i \overset{\text{i.i.d}}{\sim} N(0,1)$. 
Note that Assumption \ref{assumption: total exposure} is satisfied if and only if $\delta = 1$ and $\alpha_{ij}$'s are known. To verify robustness of our method, we choose $\delta \in \{1/4, 1/2, 1\}$ and we choose $\alpha_{ij} = 1$ for constructing $T_E^*$ using Algorithm \ref{alg: exp design}. 

To match with sample size of the cluster-based method, we choose $|\Omega_0| = |\Omega_1| = 0.1 \times |\Omega|$. Furthermore, we choose $|\Lambda_0| = |\Lambda_1| = 0.1 \times |\Omega|$. For choosing $C'$ in Algorithm \ref{alg: exp design}, we set $q = 0.5$. For setting the optimization constraints, we choose $R_{\min} = 0$, $R_{\max} = 10$, $S_{\min} = 0.2$, $S_{\max} = 5$. For running the iterative QP, we set $K = 1000$ and $maxIter = 10$.
 
%

\subsection{Real-World Experimental Setup}\label{A-subsection: experiment}
We consider a directed network $\mathcal{G} = (\Omega, E)$ of all LinkedIn members (675+ million) with edges $i \to j$ if at least one content of member $i$ is considered by the baseline recommender system to be shown to member $j$ in the last 7 days. First, we randomly choose five disjoint sets subsets $\Omega_0, \Omega_1, \Lambda_0, \Lambda_1, \Gamma$ such that $|\Omega_0| = |\Omega_1| = 0.03 \times |\Omega|$ and $|\Lambda_0| = |\Lambda_1| = 0.04 \times |\Omega|$ and $|\Gamma| = 0.06 \times \Omega$. Next, we choose $C' = \Gamma \cap (\cup_{i \in \Omega_0 \cup \Omega_1}Ch(i))$. The reason for choosing $C'$ in a sightly different manner is to make $C'$ well-defined even if the underlying network is dynamic over the experiment time window.

\subsubsection{Boost factors $b_{ij}$}\label{A-subsection: boost factor}
It follows from a straightforward calculation that the following definition of $b_{ij}$'s for $j \in C'$ will satisfy \eqref{eq: boost factor}:
\[
b_{ij} := 
\begin{cases}
\frac{p_{ij}^*~ (1 - \sum_{k \in \Omega^{\prime} \cap Pa(j)} p_{kj}^{(0)})}{p_{ij}^{(0)}~(1 - \sum_{k \in \Omega^{\prime} \cap Pa(j)} p_{kj}^*)} & \text{if $i \in Pa(i)\cap\Omega'$} \\
1 & \text{if $i \in Pa(i)\setminus\Omega'$}.
\end{cases}
\]


\end{document}